\newcommand{\tr}[1]{\textrm{#1}}
\newcommand{\mr}[1]{\mathrm{#1}}
\newcommand{\tnr}[1]{{\textnormal{#1}}}
\newcommand{\mc}[1]{\mathcal{#1}}
\newcommand{\mf}[1]{\mathsf{#1}}
\newcommand{\ms}[1]{\mathds{#1}}
\newcommand{\ov}[1]{\overline{#1}}
\newcommand{\bk}{\boldsymbol{k}}
\newcommand{\bI}{\boldsymbol{I}}
\newcommand{\bs}{\boldsymbol{s}}
\newcommand{\bu}{\boldsymbol{u}}
\newcommand{\by}{\boldsymbol{y}}
\newcommand{\bz}{\boldsymbol{z}}
\newcommand{\figref}[1]{Fig.~\ref{#1}}
\newcommand{\secref}[1]{Sec.~\ref{#1}}
\newcommand{\tabref}[1]{Table~\ref{#1}}
\newcommand{\ie}{i.e.,~} 		%note the comma and the space (~)
\newcommand{\eg}{e.g.,~}	%note the comma and the space (~)
\newcommand{\argmax}{\mathop{\mr{argmax}}}
\newcommand{\set}[1]{\{#1\}}
\newcommand{\cd}{\cdot}
\newcommand{\ld}{\ldots}
\newcommand{\pdf}{p}            			% PDF. I vote for having the PDFs and the PMFs as italic because the pmf as \tr{P}_\bX(\bX) looks weird and also because I like to see the pmfs and the pdfs as functions, which we denote using italic letters (like the Q-function, f(t), s(t), h(t), etc.)
\newcommand{\cdf}{F}            			% CDF
\newcommand{\Ex}{\ms{E}}     			% Expectation (AA).
\newcommand{\mcA}{\mc{A}}
\newcommand{\mcI}{\mc{I}}
\newcommand{\mcM}{\mc{M}}
\newcommand{\mcO}{\mc{O}}
\newcommand{\mcP}{\mc{P}}
\newcommand{\mcS}{\mc{S}}
\newcommand{\mcW}{\mc{W}}
\newcommand{\mcZ}{\mc{Z}}
\newcommand{\mcCr}{\mc{K}}
\newcommand{\mfa}{\mf{a}}
\newcommand{\mfm}{\mf{m}}
\newcommand{\mfs}{\mf{s}}
\newcommand{\mfz}{\mf{z}}
\newcommand{\mfw}{\mf{w}}
\newcommand{\SNR}{\mathsf{snr}}  %%% SNR
\newcommand{\SNRav}{\ov{\mathsf{snr}}}  %%% SNR
\newcommand{\SC}{\tnr{SC}}  %%% superposition coding
\newcommand{\TS}{\tnr{TS}}  %%% time sharing
\newcommand{\R}{R}              	% spectral efficiency
\newcommand{\Nb}{{\mathop{N_\tnr{b}}}} 	% number of bits in the block
\newcommand{\Ns}{{\mathop{N_\tnr{s}}}} 	% number of symbols in the block
 \newcommand{\MODMP}{\Phi^{\tnr{MP}}}  	
\newcommand{\MODwoa}{\Phi}      		% modulator operation without arguments
\newcommand{\ack}{\tnr{ACK}}
\newcommand{\nack}{\tnr{NACK}}
\newcommand{\kmax}{K}		%% maximum number of transmissions
\newcommand{\kk}{k}	
\newcommand{\hol}{\ell}		%% HoL index
\newcommand{\Iv}{\mathbf{I}}
\definecolor{refkey}{rgb}{0.2,0.8,0.6} % change the colors
\definecolor{labelkey}{rgb}{0.2,0.3,0.4} 
\newacronym[\glsshortpluralkey=PDFs,\glslongpluralkey=probability density functions]{pdf}{PDF}{probability density function}
\newacronym[\glsshortpluralkey=CDFs,\glslongpluralkey=cumulative density functions]{cdf}{CDF}{cumulative density function}
\newacronym[\glsshortpluralkey=PMFs,\glslongpluralkey=probability mass functions]{pmf}{PMF}{probability mass function}
\newacronym[]{cm}{CM}{coded modulation}
\newacronym[]{pam}{PAM}{pulse amplitude modulation}
\newacronym[]{bpsk}{BPSK}{binary phase shift keying}
\newacronym[]{qam}{QAM}{quadrature amplitude modulation}
\newacronym[]{psk}{PSK}{phase shift keying}
\newacronym[\glsshortpluralkey=LLRs,\glslongpluralkey=logarithmic likelihood ratios]{llr}{LLR}{logarithmic likelihood ratio}
\newacronym[]{map}{MAP}{maximum a posteriori}
\newacronym[]{ml}{ML}{maximum likelihood}
\newacronym[]{md}{MD}{multiuser diversity}
\newacronym[]{siso}{SISO}{single-input single-output}
\newacronym[\glsshortpluralkey=MIs,\glslongpluralkey=mutual informations]{mi}{MI}{mutual information}
\newacronym[\glsshortpluralkey=GMIs,\glslongpluralkey=generalized mutual informations]{gmi}{GMI}{generalized mutual information}
\newacronym[]{bicm-gmi}{BICM-GMI}{BICM generalized mutual information}
\newacronym[]{awgn}{AWGN}{additive white Gaussian noise}
\newacronym[]{amc}{AMC}{adaptive modulation and coding}
\newacronym[]{mdp}{MDP}{Markov decision process}
\newacronym[]{pomdp}{POMDP}{partially observable Markov decision process}
\newacronym[]{psimdp}{PSI-MDP}{Partial State Information Markov Decision Process}
\newacronym[]{phy}{PHY}{physical layer}
\newacronym[]{llc}{LLC}{link layer control}
\newacronym[]{sp}{SP}{set-partitioning}
\newacronym[]{gsm}{GSM}{global system for mobile communications}
\newacronym[]{edge}{EDGE}{enhanced data rates for GSM evolution}
\newacronym[]{3gpp}{3GPP}{3rd generation partnership project}
\newacronym[]{dvb}{DVB}{digital video broadcasting}
\newacronym[]{harq}{HARQ}{hybrid automatic repeat request}
\newacronym[]{irharq}{HARQ-IR}{incremental redundancy HARQ}
\newacronym[]{ir}{IR}{incremental redundancy}
\newacronym[]{rpr}{RR}{repetition redundancy}
\newacronym[]{cc}{CC}{chase combining}
\newacronym[]{hol}{HoL}{head of the line}
\newacronym[\glsshortpluralkey=PCCCs,\glslongpluralkey=parallel concatenated convolutional codes]{pccc}{PCCC}{parallel concatenated convolutional code}
\newacronym[\glsshortpluralkey=TCs,\glslongpluralkey=turbo codes]{tc}{TC}{turbo code}
\newacronym{ldpc}{LDPC}{low-density parity-check}
\newacronym[]{ofdm}{OFDM}{orthogonal frequency-division multiplexing}
\newacronym[]{bep}{BEP}{bit-error probability}
\newacronym[]{iid}{i.i.d.}{independent and identically distributed}
\newacronym[]{ami}{AMI}{accumulated mutual information}
\newacronym[]{blep}{BLEP}{block-error probability}
\newacronym[]{sep}{SEP}{symbol-error probability}
\newacronym[]{ttcm}{TTCM}{turbo-trellis coded modulation}
\newacronym[\glsshortpluralkey=SNRs,\glslongpluralkey=signal-to-noise ratios]{snr}{SNR}{signal-to-noise ratio}
\newacronym[]{ra}{RA}{resource allocation}
\newacronym[]{ral}{RA}{rate allocation}
\newacronym[]{csi}{CSI}{channel state information}
\newacronym[]{rsi}{RSI}{receiver state information}
\newacronym[\glsshortpluralkey=UFs,\glslongpluralkey=utility functions]{uf}{UF}{utility function}
\newacronym[]{er}{ER}{equal-rate}
\newacronym[]{rr}{RR}{round-robin}
\newacronym[]{pf}{PF}{proportional fairness}
\newacronym[]{ts}{TS}{time-sharing}
\newacronym[]{sc}{SC}{superposition coding}
\newacronym[]{pf-ts}{PF-TS}{proportionally fair TS}
\newacronym[]{pf-sc}{PF-SC}{proportionally fair SC}
\newacronym[]{er-ts}{ER-TS}{equal-rate TS}
\newacronym[]{er-sc}{ER-SC}{equal-rate SC}
\newacronym[]{hm}{HM}{hierarchical modulation}
\newacronym[]{lhs}{l.h.s.}{left-hand side}
\newacronym[]{rhs}{r.h.s.}{right-hand side} 
\newacronym[\glsshortpluralkey=BSs,\glslongpluralkey=base-stations]{bs}{BS}{base-station}
\newacronym[\glsshortpluralkey=MSs,\glslongpluralkey=mobile-stations]{ms}{MS}{mobile-stations}
\newacronym[]{kkt}{KKT}{Karush--Kuhn--Tucker} 
\newacronym[]{mcs}{MCS}{modulation/coding scheme}
\newcommand{\siz}{1.2}  %%%%%taille des figures 0.85 two columns 
\newcommand{\sizf}{0.65}   %%%%%taille des figures 1 two columns
\title{Multi-packet Hybrid ARQ: Closing gap to the ergodic capacity}
\author{Mohammed Jabi,~Aata El Hamss,~Leszek Szczecinski,~and~Pablo Piantanida
\\
\thanks{%
M. Jabi and L. Szczecinski are with INRS-EMT, Montreal, Canada. [e-mail: \{jabi,~leszek\}@emt.inrs.ca].}%
\thanks{%
A. El Hamss is with InterDigital Communications, Montreal, Canada. [e-mail: aata.elhamss@gmail.com].}%
\thanks{%
P. Piantanida is with SUPELEC, Gif-sur-Yvette, France. [e-mail: pablo.piantanida@supelec.fr].}%
 \thanks{%
Part of this work will be presented at the {IEEE} Global Communication Conference (GLOBECOM), Austin, Texas, USA, December 2014.} %
\thanks{%
The work was supported by the government of Quebec, under grant \#PSR-SIIRI-435.}%
}%
\newtheorem{lemma}{Lemma}
\begin{document}

\maketitle
\thispagestyle{empty}

\begin{abstract}
In this work we consider \gls{ir} \gls{harq}, where transmission rounds are carried out over independent block-fading channels. We propose the so-called multi-packet HARQ where the transmitter allows different packets to share the same channel block. In this way the resources (block) are optimally assigned throughout the transmission rounds. This stands in contrast with the conventional HARQ, where each transmission round occupies the entire block. We analyze superposition coding and time-sharing transmission strategies and we optimize the parameters to maximize the throughput. Besides the conventional one-bit feedback (ACK/NACK) we also consider the rich, multi-bit feedback. To solve the optimization problem we formulate it as a Markov decision process (MDP) problem where the decisions are taken using accumulated mutual information (AMI) obtained from the receiver via delayed feedback.  When only one-bit feedback is used to inform the transmitter about the decoding success/failure (ACK/NACK), the Partial State Information Markov Decision Process (PSI-MDP)  framework is used to obtain the optimal policies.  Numerical examples obtained in a Rayleigh-fading channel indicate that, the proposed multi-packet HARQ outperforms the conventional one, by more than 5 dB for high spectral efficiencies.
\end{abstract}

\begin{keywords}
Block Fading Channels, Hybrid Automatic Repeat reQuest, Partial State Information, Markov Decision Process, Superposition Coding, Time Sharing.
\end{keywords}

\section{Introduction}\label{Sec:Introduction}
%%% brief: what we talk about
\IEEEPARstart{I}{n this} work we propose and  optimize the strategies to increase the throughput of \gls{harq} transmission over a block fading channel. 

\gls{harq} is mostly used to deal with the loss of data packets due to unpredictable changes in the channel and consists in ``handshaking''  between the transmitter and the receiver: the receiver sends the binary messages via a feedback channel to inform the transmitter about the success (ACK message) or failure (NACK) of the transmission. Then  a new version of the lost packet is transmitted. This process continues till ACK is received or---in the case of \emph{truncated} \gls{harq}---till the maximum number of transmission rounds is reached.  In practice, the truncation appears as an implementation constraint but also may be justified when transmitting data that is delay-sensitive, which after a prescribed time may loose its validity.

\gls{harq} is often perceived as an additional ``guarantee'', which works on the top of the \gls{phy}. However, it was already shown in previous works, that adjusting the \gls{phy}-related parameters (rate, coding, power) as a function of the \gls{harq} state can significantly improve the performance, \eg in terms of the average transmission rate (throughput) \cite{Cheng03,Uhlemann03,Visotsky05,Pfletschinger10,Szczecinski13}, or the outage \cite{LiuR03,Kim11,Nguyen12}.

%%% multi-bit feedback
While the parameters of the PHY may be adjusted as a function of the number of received NACK messages, more substantial gains are obtained exploiting the additional information sent by the receiver. We thus consider the case when, on top of ACK/NACK messages, the receiver conveys over the feedback channel, ``multi-bit'' \gls{rsi}. As we explain later, \gls{rsi} represent the state of the decoder, which is related to the \gls{csi} determined by the channel gains in different blocks. These channel gains are assumed to be \gls{iid} random variables (a modelling, which captures adequately the fact that retransmissions are usually scheduled in well-spaced time instant to absorb the processing and round-trip delays). Therefore, the transmitter is unable to infer the instantaneous \gls{csi} from the \gls{rsi}, and thus, the conventional \gls{amc} cannot be used.

Nevertheless, the \gls{rsi} that accompanies the NACK message, provides the transmitter with valuable prior allowing it to suitably adjust its \gls{phy} parameters in the subsequent rounds. Indeed, the use of \gls{rsi} was already considered before in \cite{Visotsky05,Gan10,Szczecinski13} to adapt the length of the codewords with the implicit objective of shortening the average number of transmissions and using the time ``released'' in such a way to send more packets. 

It is worth mentioning here that, in \cite{Visotsky05,Gan10,Szczecinski13}  the lengths were optimized in abstraction of the system-level considerations. Thus, exploiting the released time is not obvious because the codewords with variable lengths are sent within the same block. The lengths are optimized in abstraction of the block-length so, in general, they do not match the latter. This, will produce an unoccupied time within the block, which in turn translates into a throughput loss. To address this problem, \cite{Wang07,Szczecinski13} proposed the use of many packets within a single block,  reducing in this way the impact of the unoccupied time. This may go, however,  against the practical considerations of having only one or a few packets in the block. Moreover, regardless of system-level considerations, recognizing that various packets are transmitted within the same block raises the question of optimality of the conventional \gls{ts} approach, and the \gls{sc} \cite{Shamai03} may be an alternative as already considered before with \gls{harq}, \eg in \cite{Zhang09,Takahashi10,Chaitanya11b}.

%%%%%%%%%%
\begin{figure*}[th]
\begin{center}
%\resizebox{\linewidth}{!}{\input{./Includes/HARQ_MODEL.tex}}
% We need layers to draw the block diagram
\pgfdeclarelayer{background}
\pgfdeclarelayer{foreground}
\pgfsetlayers{background,main,foreground}

% Define a few styles and constants

\tikzstyle{form1} = [draw, minimum width=1.6cm, text width=1.5cm, fill=blue!5, 
  text centered,  minimum height=.9cm]

\tikzstyle{form2} = [draw=none, minimum width=3.4cm, text width=1.5cm, fill=none, 
  text centered,  minimum height=0cm]    

\tikzstyle{form3} = [draw,dashed, minimum width=1.5cm, text width=1.4cm, fill=blue!2, 
  text centered,  minimum height=1.8cm]
  
  \tikzstyle{form4} = [draw=none, minimum width=0cm, text width=0cm, fill=none, 
  text centered,  minimum height=.6cm] 
    
    \tikzstyle{elip}=[draw,ellipse, minimum width=.2cm, text width=0cm, fill=none, 
  text centered,  minimum height=1cm]
  
    \tikzstyle{elip2}=[draw,ellipse, minimum width=.1cm, text width=0cm, fill=none, 
  text centered,  minimum height=.6cm]

\begin{tikzpicture}

  %%%%%%%%%% noeuds  %%%%%%%%%%%%%%%%  
    \node(Channel)[form1] {Channel};
    
     \path (Channel.west)+(-2.7,0)  node[form1](Enc) {Encoder};
     \path (Channel.east)+(2.7,0)  node[form1](Dec) {Decoder};
     \path (Channel.north)+(0,1.6)  node[form3](FedBack) {Feedback Channel};
      \path (Enc.north)+(0,1.6)  node[form1](Arq_cont1) {HARQ Controller};
      \path (Dec.north)+(0,1.6)  node[form1](Arq_cont2) {HARQ Controller};
       \path (Enc.west)+(-1,-0.2)  node(Buff1) {};
       
                   %%%% buf1
       \path (Buff1)+(0,2.3)  node[form2](Buff11) {};
      %  \path (Buff1)+(0,-.5)  node[form2](Buff12) {};%(0,-2)
          \path (Buff1)+(-2,2.3)  node[form2](Buff13) {};
       % \path (Buff1)+(-2,-.5)  node[form2](Buff14) {};%(-2,-2)
        
         \path (Buff1.south)+(0,-.3)  node(Buff1P1) {};
         \path (Buff1.north)+(0,.6)  node[form4](Buff1P2) {};
                    %%%%%%%%%
                    
        \path (Dec.east)+(1,0)  node[form2](Buff2) {};
        
                  %%%%% buf2
         \path (Buff2)+(0,2.3)  node[form2](Buff21) {};
        %\path (Buff2)+(0,-2)  node[form2](Buff22) {};
          \path (Buff2)+(2,2.3)  node[form2](Buff23) {};
       % \path (Buff2)+(2,-2)  node[form2](Buff24) {};
        
           \path (Buff2.south)+(0,-.2)  node(Buff2P1) {};
         \path (Buff2.north)+(0,.2)  node(Buff2P2) {};
                   %%%%%%%%
         \path (FedBack.east)+(0,-0.6)  node(P1_fed) {};
          \path (FedBack.west)+(0,-0.7)  node(P2_fed) {};
           \path (Dec.north)+(-0.6,0)  node(P1_Dec) {};
           \path (Enc.north)+(0.6,0)  node(P1_Enc) {};

%%%%%%%%%%%%%%%%%%%%%%%%%%%%%%%%       

 %%%%%%%%%%%%%          Buffers       %%%%%%%%%%%%%%%      
      \path[draw] (Buff11)--(Buff1P1.base);
         \path[draw] (Buff1P1)+(-2,0)--(Buff13);

    \path[draw] (Buff21)--(Buff2P1.base);
         \path[draw] (Buff2P1)+(2,0)-- (Buff23);

       \path[draw] (Buff1.north)+(-2,0)--(Buff1.north);
     \path[draw] (Buff1P1)+(-2,0)--node[shift={(0,.3)}]{$\bu_{\hol}$}(Buff1P1.base);
      \path[draw] (Buff1P2)+(-2,0)--node[shift={(0,-.3)}]{$\bu_{\hol+1}$}(Buff1P2.base);
      
       \path[draw] (Buff2P1)+(2,0)--(Buff2P1.base);
      \path[draw] (Buff2P2)+(2,0)--(Buff2P2.base);

            \path (Buff1P1)+(-1,1.9) node(pts1){$\vdots$};

      \path (Buff2P1)+(1,1.3) node(pts4){$\vdots$};
      
      \path(Buff1P1)+ (.5,0.6) node[elip](el){};
      
      \path(Buff2P1)+ (-.5,0.3) node[elip2](el2){};
      
 %%%%%%%%%%%%%%%%%%%%%%%%%%%%%%%%%%%%%%%

 %%%%%%%%%%%%%%%    Paths     %%%%%%%%%%%%%%%%%%
 
 \path[draw,->,>=latex] (Enc.east)--node[shift={(0,.3)}]{$\bs[n]$}(Channel.west);
  \path[draw,->,>=latex] (Channel.east)--node[shift={(0,.3)}]{$\by[n]$}(Dec.west);

  \path[draw,dashed,<-,>=latex] (Buff1P2.south)+(1,0)--(Buff1P2.south);
   \path[draw,<-,>=latex] (Buff1.base)+(1,0)--(Buff1.base);
     \path[draw,->,>=latex] (Buff2.base)+(-1,0)--(Buff2.base);

    \path[draw,->,>=latex] (Arq_cont1.south)--node[shift={(0.8,0)}]{$\mfa[n]$}(Enc.north);
     \path[draw,<-,>=latex] (Arq_cont2.south)--node[shift={(0.4,0)}]{$\mcM[n]$}(Dec.north);

      \path[draw] (Arq_cont2.west)--node[shift={(0,.3)}]{$\mcM[n]$}(FedBack.east);
      
       \path[draw,<-,>=latex] (Arq_cont1.east)--node[shift={(0,.3)}]{$(\mcM[n], I[n])$}(FedBack.west);
       
       \path[draw] (P1_fed.base)-| node[shift={(-1.2,.2)}]{$I[n]$}(P1_Dec.base);
     %  \path[draw,->,>=latex] (P2_fed.base)-|(P1_Enc.base);
       
       \path[draw,->,>=latex] (Arq_cont1.west)-|(el.north);
       \path[draw,->,>=latex] (Arq_cont2.east)-|(el2.north);

 %%%%%%%%%%%%%%%%%%%%%%%%%%%%%%%%%%%%%%%

\end{tikzpicture}

\end{center}
\caption{Model of the \gls{harq} transmission: the \gls{harq} controller uses the information obtained over the feedback channel to take actions $\mfa[n]$, which determine the encoding.}\label{Fig:HARQ_buffer}
\end{figure*}
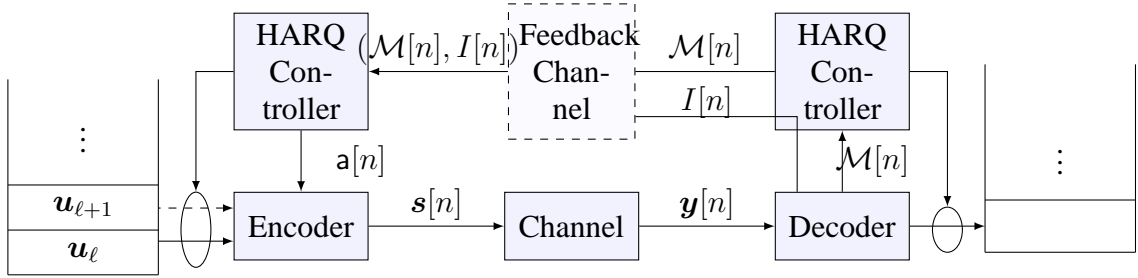
%%%%%%%%%%

In this work we address these two issues. On one hand, we explicitly consider the constraints resulting from the transmission of variable-length packets within the same block \cite{Verdu10}; this links the \gls{harq} design with system-level considerations, an issue that was lacking in \cite{Visotsky05, Szczecinski13}. On the other hand, we consider the \gls{sc} and \gls{ts} as alternatives for the joint encoding of the new and retransmitted packets; we formally optimize the parameters which was not considered in \cite{Zhang09,Takahashi10,Chaitanya11b}.

We look at the \gls{harq} as a \emph{control} process based on the feedback signal and  we optimize the \emph{actions} which are given by the joint encoding (and its parameters) to be used. To solve the optimization we define our problem as a \gls{mdp} \cite[Ch.~7.4]{Bertsekas05_book}, which was already used in different contexts for \gls{harq} optimization, \eg in  \cite{Karmokar06,Djonin08}. We maximize the  throughput which is a relevant performance criterion as it can be directly related to the ergodic (long term) channel capacity \cite{Caire01,Wu10}. 

The paper is organized as follows: \secref{Sec:Model}  defines the model of the system under study, which is then cast into \gls{mdp} in \secref{Sec:MDP.Signalling}. We explain the optimization in \secref{Sec:Th.MDP} and the numerical results are shown and discussed in \secref{Sec:Num.Results}. The case of one bit feedback is analyzed in \secref{Sec:One_bit}. We conclude the work in \secref{Sec:Conclusions}.

\section{Model of \gls{harq}}\label{Sec:Model}
We first describe the conventional \gls{harq}, which allows us to define the useful notation and next we discuss the \emph{multi-packet} \gls{harq} we propose.
\subsection{Conventional HARQ and Notations}
We consider a point-to-point transmission using \gls{harq} shown in \figref{Fig:HARQ_buffer}, where  the transmitter sends the data block $\bs[n]$ of $\Ns$ symbols over a block-fading channel. Each block $n$ contains the encoded version of $\Nb$ information bits contained in $\bu_{\hol}$, where $\hol$ indicates the \gls{hol} packet at block time $n$ (the \gls{hol} packet is the first packet in the buffer to be transmitted); for convenience we say that the packet $\bu_{\hol+1}$ is ``\gls{hol}-next''. The coding rate per packet is thus $\R=\Nb/\Ns$. The receiver observes the channel outcome $\by[n]$ and attempts to decode $\bu_{\hol}$. We note that the indices of the packets are not the same as indices of the blocks, we refer to the former via subindexing, \eg $\bu_\hol$, and--to the latter--via arguments within brackets, \eg $\bs[n]$. We also use $n_{\ell}$ to denote the index of the block $s[n]$ when the packet $\bu_{\ell}$ was transmitted for the first time. 

%; the relationship between both can be established as $n_{\hol}$.

More than one transmission may be necessary to deliver the packet and the index of the transmission round of  the \gls{hol} packet is kept by transmitter in the \gls{harq} counter $\kk_{\hol}$. For convenience, we assume that the \gls{harq} counter $\kk$ for the $t$-th packet entering the transmitter's buffer, where $t>\hol$ is set to zero, \ie \mbox{$\kk_{t}\leftarrow 0$}. The decoding errors are detected at the receiver which sends to the transmitter a binary message $\mcM[n]=\mcM_{\ell,\kk_\ell}$, where $\mcM_{\ell,\kk_\ell}=\ack$ if the decoding of $\bu_{\hol}$ is successful in the $\kk_\ell$-th round,  or $\mcM_{\ell,\kk_\ell}=\nack$ if the decoding fails. Reassume that the feedback channel is error-free.

Due to the propagation, transmission, and processing delays, the message $\mcM[n]$ arrives at the transmitter at time block $n+1$, where it can be used by the \gls{harq} controller. The latter discards the \gls{hol} packet when $\ack$ is received or the maximum number of transmission rounds $\kmax$ is reached; then the \gls{hol}-next packet becomes the \gls{hol}. Formally, this is done, first, incrementing the \gls{hol} index, which otherwise does not change, \ie 
\begin{equation}\label{upd.rule}
\hol\leftarrow
\begin{cases} 
\hol+1 , &\text{if}	~\mcM_{\ell,\kk_\ell}=\ack~\text{or}~\kk_{\hol} = \kmax \\
\hol, 	& \text{otherwise }
\end{cases}
\end{equation} 
and, next, increasing the \gls{harq} counter of the \gls{hol} packet
\begin{align}\label{upd.rule.1}
\kk_{\hol}\leftarrow \kk_{\hol}+1.
\end{align}

When the packets have their counter sets  to zero, increasing it via \eqref{upd.rule.1}, we obtain $\kk_\hol=1$, which means that we start the \gls{harq} process of the packet $\bu_\hol$.

The encoding, in general depend on the index of the transmission round, \ie on the \gls{harq} counter $\kk_\hol$, \ie 
\begin{align}\label{Enc.1}
\bs[n]=\MODwoa_{\kk_\hol}(\bu_{\hol}).
\end{align}

In particular, when $\MODwoa_{\kk}(\bu_{\hol})=\MODwoa_{1}(\bu_{\hol}), \kk=1,\ld,\kmax$ we have the case of transmission with the \gls{rpr} \cite{Benelli85} \cite{Chase85}. That is, irrespectively of $\kk_\hol$, the transmitted symbols $\bs[n]$ are always the same for the given $\bu_\ell$.  If, on the other hand, $\kk_\hol$ is used to extract different subcodewords of the mother code's codeword
\begin{align}\label{}
[\MODwoa_{1}(\bu),\ld, \MODwoa_{\kmax}(\bu)],
\end{align}
we obtain the well-known \gls{irharq} \cite{Hagenauer88}, where conventionally, all the subcodewords $\MODwoa_{k_{\hol}}$ have the same length. \gls{irharq} is the focus of this work.

The  channel outcome at the receiver is given by
\begin{equation}
\by[n]=\sqrt{\SNR[n]}\bs[n]+\bz[n],
\end{equation}
where  $\SNR[n]$ is the channel \gls{snr}, $\bz[n]$ is a zero mean, unitary-variance Gaussian variable modelling noise. We assume that $\SNR[n]$ are \gls{iid} and their \gls{pdf} is known. For numerical evaluation we use the exponential form (\ie we consider Rayleigh block fading)
\begin{equation}
\pdf_{\SNR}(x)=\SNRav^{~-1}\exp\big(- {x}/{\SNRav}\big),
\end{equation}
where $\SNRav$ is the average \gls{snr}. 

In \gls{irharq}, the receiver decodes the packet $\bu_{\hol}$ concatenating the $\kk\leq\kmax$ blocks of channel outcomes
\begin{align}\label{}
\by_{\hol,\kk}\triangleq\big[\by[n_\hol],\ld,\by[n_\hol+\kk-1]\big].
\end{align}
%where $n_\hol$ denotes the first block $n$ during which the \gls{hol} packet $\bu_{\hol}$ was transmitted.

Then, assuming the subcodewords $\MODwoa_{\kk}(\bu_{\hol})$ are drawn from randomly generated codebook, and for sufficiently large $\Ns$, the decoding is successful if the \gls{ami} defined as 
\begin{align}\label{ami.defi}
I_{\hol,\kk}\triangleq\sum_{t=0}^{k-1} C(\SNR[n_\hol+t]) 
\end{align}
exceeds the transmission rate, \ie
\begin{align}\label{ami.k}
I_{\hol,\kk}>\R.
\end{align}
For simplicity, we assume only  Gaussian inputs, \ie
\begin{align}\label{}
C(\SNR)\triangleq\log_2(1+\SNR).
\end{align}
The \gls{ami} is thus equivalent to the \gls{rsi} and we may sent it to the transmitter over the feedback channel but it is irrelevant for the conventional HARQ.
%%%%%   %%%%%    %%%%%
\subsection{Multi-packet HARQ}\label{subsec:S_Model}
%%%%%   %%%%%    %%%%%
In the conventional \gls{harq},  $\bs[n]$ depends solely on the \gls{harq} counter $\kk_{\hol}$ and the packet contents $\bu_{\hol}$, as per \eqref{Enc.1}. For the proposed multi-packet HARQ, we assume that the encoder is able to jointly encode the \gls{hol} packet $\bu_{\hol}$ and the \gls{hol}-next packet $\bu_{\hol+1}$ as shown also schematically in \figref{Fig:HARQ_buffer}, \ie
\begin{align}\label{}
\bs[n]=\MODMP({\bu_{\hol},\bu_{\hol+1},\kk_{\hol},\kk_{\hol+1}, p}),
\end{align}
where $p$ is the parameter of the encoding and it is a function of the counters $\kk_{\hol}$ and $\kk_{\hol+1}$ and the \gls{rsi}.

We consider four possible encoding modes
\begin{itemize}
\item 
Conventional, one-packet transmission, which we denote by '1P', where the block $\bs[n]$ is occupied only by the subcodewords of the \gls{hol} packet,  
\item
Dropped, one-packet transmission, which we denote by '0P', where we stop the transmission of the \gls{hol} packet and $\bs[n]$ is occupied only by the subcodewords of the \gls{hol}-next packet,
\item
\gls{ts} transmission, where the subcodewords of the \gls{hol} and \gls{hol}-next packets are transmitted in non-overlaping parts of the block, with the time-sharing defined by $p$, and  
\item 
\gls{sc} transmission, where both codewords are superimposed with power fractions defined by $p$. 
\end{itemize}
We clarify this in \tabref{Tab:mod.schemes}, where we add details to the encoding notation in \eqref{Enc.1} as follows:
\begin{align}\label{}
\MODwoa_{\kk_\hol}(\bu_\hol,\alpha)
\end{align}
where $\alpha$ indicates the relative length of the $\kk$th subcodeword composed of $\alpha \Ns$ symbols.

%%%%%
\begin{table}[ht!]
\centering
\begin{tabular}{l|c}
$\mfa[n]$  	& $\MODMP({\bu_{\hol},\bu_{\hol+1},\kk_{\hol},\kk_{\hol+1}, p})$ \\
\hline
\hline
$(\tnr{1P}, -)$  	& $\MODwoa_{\kk_{\hol}}(\bu_\hol, 1)$\\
\hline
$(\tnr{0P}, -)$  	& $\MODwoa_{\kk_{\hol+1}}(\bu_{\hol+1}, 1)$\\
\hline
$(\TS, p)$  	& $\big[ \MODwoa_{\kk_{\hol}}(\bu_\hol, p), \MODwoa_{\kk_{\hol}+1}(\bu_{\hol+1}, 1-p) \big]$\\
\hline
$(\SC, p)$ 	& $\sqrt{p}\MODwoa_{\kk_{\hol}}(\bu_\hol, 1)+\sqrt{1-p} \MODwoa_{\kk_{\hol+1}}(\bu_{\hol+1}, 1)$
\end{tabular}
\caption{Results of encoding actions $\mfa[n]$ for the joint encoding of the \gls{hol} and \gls{hol}-next packets $\bu_\hol$ and $\bu_{\hol+1}$.}
\label{Tab:mod.schemes}
\end{table}
%%%%% 

We note that when the joint encoding (\gls{ts} or \gls{sc}) is chosen, the feedback channel transmits the decoding result for both \gls{hol} and \gls{hol}-next packets, \ie $ \mcM[n]=(\mcM_{{\hol},\kk_{\hol}},\mcM_{{\hol+1},\kk_{\hol+1}})$.

The role of the \gls{harq} controller is to decide on the encoding \emph{actions} $\mfa[n]=(\mfm[n],p[n])$, where $\mfm[n]\in\mcA_\tnr{mod}=\set{\tnr{1P},\tnr{0P},\TS,\SC}$ defines the  encoding ``mode'' and $p[n] \in \mcA_{\tr{p}}=]0,1[$ is the encoding parameter as specified also in \tabref{Tab:mod.schemes}. The conventional \gls{harq} always takes the same encoding action $\mfa[n]=(\tnr{1P},-)$, where we use ``$-$''  to indicate that, in this case, the parameter $p$ is irrelevant from the point of view of the encoding. The actions are taken from the action-space $\mcA=\mcA_\tnr{mod}\times \mcA_\tr{p}$ and may result in one-packet (\tnr{1P} or \tnr{0P}) or a  multi-packet (\gls{sc} or \gls{ts}) transmissions.

The rule for updating $\hol$ becomes now the following:
\begin{equation}\label{upd.rule.joint}
\hol\leftarrow
\begin{cases} 
\hol+1 , 	&\text{if}~\mcM_{\hol,\kk_{\hol}}=\ack~\text{or}~\kk_{\hol} = \kmax~\text{or}~\mfm=\tnr{0P}\\
\hol, 	& \text{otherwise }
\end{cases}
\end{equation} 
and the \gls{harq} counters' update takes into account the possibility of joint encoding
\begin{align}\label{upd.rule.joint.1}
\kk_{\hol}\leftarrow \kk_{\hol}+1,~~~~~~~\forall \mfm\in \mcA_\tnr{mod},~~~~~~
\end{align} 
\begin{align}\label{upd.rule.joint.2}
\kk_{\hol+1}\leftarrow
\begin{cases} 
0,&\text{if}~\mfm\in\set{\tnr{1P}\cup\tnr{0P}}\\
\kk_{\hol+1}+1,&\text{if}~\mfm\in\set{\TS\cup\SC}
\end{cases} ,
\end{align}
where the second condition in \eqref{upd.rule.joint.2} reflects the fact that both, \gls{hol} and \gls{hol}-next packets are transmitted simultaneously.

We note that the actions from the subspace $\mcA_\tr{mod} \times \set{0,1}$ are explicitly excluded because for the joint encoding with $p\in\set{0,1}$, one of the packets is deprived of the transmission time (TS) or power (SC), while its \gls{harq} counter would increment as per the second line of \eqref{upd.rule.joint.2}. This is clearly suboptimal so the cases of $p\in\set{0,1}$ are handled by the encoding modes \tnr{1P} or \tnr{0P}, where only one \gls{harq} counter is incremented. 

We also note that the action $(\tnr{0P},-)$ was already used in \cite{Szczecinski13}\cite{jabi14} and means that the packet is abandoned when there is no ``reasonable hope'' to decode it successfully.  While making such a decision may seem drastic, the dropped packeted may be re-injected into the transmitter's buffer as may also be other packets considered lost. This issue depends on the sensitivity of the source to the delay in the packets' delivery---the problem we do not consider here.

The decoding is done in the similar way as in the one-packet \gls{harq}. In the case of the \gls{ts} encoding, after the $k$-th transmission, the decoding is successful provided that \mbox{$I_{\hol,k}>R$}, with 
\begin{align}\label{I.ts.1}
I_{\hol,k}=I_{\hol,k-1}+  p C(\SNR),
\end{align}
where  by definition $I_{\hol,0}=0$ and to simplify the notation we used $\SNR\equiv\SNR[n_\hol+k-1]$.

At the same time we find the \gls{ami} for the \gls{hol}-next packet as
\begin{align}\label{I.ts.2}
I_{\hol+1,k}=I_{\hol+1,k-1}+(1-p)C(\SNR).
\end{align}

Since $\SNR$ is random, increasing $p$, the probability of successful decoding of the packet $\bu_{\hol}$ increases  with $p$, see \eqref{I.ts.1}, but at the same time, the probability of correct decoding of the \gls{hol}-next packet  is decreased. Thus, it is not possible to improve simultaneously the reliability of transmission of both packets. The challenge of optimizing  the encoding actions lies in striking the balance between these  contradictory effects.

In the case of the \gls{sc}, the decoding is slightly more involved because the \gls{hol} and \gls{hol}-next packets interfere with each other. For simplicity we only consider single packet decoding, in contrast to  joint decoding of both packets\footnote{While joint decoding is possible (as we transformed our point-to-point channel into multiple access channel), joint decoding would result in an additional complex at the receiver and on the other hand the analysis would become much involved.}. That is, we assume that decoding of $\bu_\hol$ depends solely on $I_{\hol,k-1}$ and $\by[n_\hol+k-1]$, \ie the \gls{ami} of the \gls{hol} packet is given by 
\begin{align}\label{I.sc.1}
I_{\hol,k}=I_{\hol,k-1}+  C\left(\frac{p\SNR}{1+(1-p)\SNR}\right).
\end{align}

The \gls{ami} for the \gls{hol}-next packet is given by 
\begin{align}\label{I.sc.2}
I_{\hol+1,k}=
\begin{cases}
I_{\hol+1,k-1}+C\left((1-p) \SNR \right)&\text{if}\quad I_{\hol,k}> R\\
I_{\hol+1,k-1}+C\left(\frac{(1-p) \SNR}{1+p\SNR}\right)&\text{if}\quad I_{\hol,k}\leq R
\end{cases};
\end{align}
in the case $\{I_{\hol,k}> R\}$  we assume that the interference induced by the superposed subcodeword $\MODwoa_{\kk_{\hol}}(\bu_\hol)$ was removed; in the other case---that the interference cannot be removed because the \gls{ami} related to the \gls{hol} packet is not sufficiently large to allow for decoding ($I_{\hol,k}\leq R$).

\section {Multi-bit feedback} \label{sec_multi.bit}

We assume that after each  multi-packet \gls{harq} round, on top of the conventional signalling $\mcM[n]=(\mcM_{{\hol},\kk_{\hol}},\mcM_{{\hol+1},\kk_{\hol+1}})$, the transmitter is provided with additional information about the state of the receiver. In particular, since the decoding success/failure are determined by the AMI, see \eqref{ami.k}, the \gls{ami} $\bI[n]=(I_{{\hol},\kk_{\hol}},I_{{\hol+1},\kk_{\hol+1}})$  is sent via the feedback channel to the transmitter\footnote{We may send $I_{\ell,\kk_\ell}$ or, instead, report $\SNR[n]$ and let the transmitter to calculate the \gls{ami} via \eqref{I.ts.1}, \eqref{I.ts.2}, \eqref{I.sc.1}, or  \eqref{I.sc.2}}, as shown in \figref{Fig:HARQ_buffer}. We focus on optimizing the encoding action $\mfa$, targeting the maximization of the throughput. The key idea is to represent the multi-packet \gls{harq} as an \gls{mdp} $(\mcS, \mcA,  \mcW, Q, r )$ where  $\mcS,  \mcA$ and $\mcW$ are the state space, the action space and the disturbance space respectively, $Q$ is the transition law and $r$ is the reward  \cite[Ch.~1]{Bertsekas05_book}. 

Being in state $\mfs \in \mcS$ at block time $n$ gives the controller the possibility to take action $\mfa[n] \in \mcA(\mfs)$,  after which the \gls{harq} moves to the next state $\mfs'\in\mcS$ at time $n+1$.  In general, not all actions are allowed in a state $\mfs$, thus  $ \underset{\mfs \in \mcS}{\bigcup}\mcA(\mfs)=\mcA$. The transition from the state $\mfs$ to $\mfs'\in\mcS$ depends uniquely on action $\mfa[n]$ and on a random disturbance $\mfw[n]\in\mcW$. Thus, the transition between states is characterized in probabilistic manner described by the transition law  $Q$. The function $r(\mfs,\mfa): \mcS\times \mcA\rightarrow \mathbb{R}$ defines the reward acquired when the system is in state $\mfs$ and the controller chooses action $\mfa$.  
%%%%%   %%%%%    %%%%%
\subsection{Markov Decision Process}\label{Sec:MDP.Signalling}
%%%%%   %%%%%    %%%%%
In the multi-packet \gls{harq}, the feedback messages \gls{ami} $\bI[n]$, as well as the \gls{harq} counters $\bk[n]=(\kk_{\hol}, \kk_{\hol+1})$ define the state of the \gls{harq} process. The \gls{ami} $\bI$ is defined over the set $\mcI^{2}$ by the quantities:
\begin{align}
\label{def.mcI}
\mcI&\triangleq[0,R]\cup \mcI_{R_+},\\
\label{def.Rp}
\mcI_{R_+}&\triangleq(R,\infty),
\end{align}
where \eqref{def.Rp} explicitly groups  those values of the \gls{ami} which lead to the event of decoding success ($I_{\hol,k}>R$).
With $\kmax$ allowed transmission rounds and considering that \gls{hol} and \gls{hol}-next packets may be sent simultaneously, the \gls{harq} counters $\bk$ can only take values in the set $\mcCr=\set{(1,0),(2,1),(2,0),\ldots,(K,K-1),(K,K-2),\ldots,(K,0)}$. Thus, the state space $\mcS$ is defined as: $\mcS=\mcCr \times \mcI^{2}$. Each state is thus represented by a quadruplet $(\kk_{\ell},\kk_{\ell+1},I_{\ell},I_{\ell+1})$. We note that the value of \gls{ami} $\bI[n]$ contains implicitly the results of the decoding $\mcM[n]$.

For $\mfs \in \mcS$, we denote by  $\mcM^{\mfs}$, $\bk^{\mfs}$ and $\bI^{\mfs}$ the corresponding value of $\mcM$,  $\bk$ and $\bI$ respectively. We distinguish the following subsets:
\begin{align}
\mcS_{\ack,\ack}&=\Big\{\mfs\in \mcS | \mcM^{\mfs}=(\ack,\ack)\Big\}\\
\mcS_{\nack,\nack}&=\Big\{\mfs\in \mcS | \mcM^{\mfs}=(\nack,\nack)\Big\}\\
\mcS_{\ack,\nack}&=\Big\{\mfs\in \mcS | \mcM^{\mfs}\in\set{(\ack,\nack), \nonumber \\
  &~~~~~~~~~~~~~~~~~~~(\nack,\ack) }\Big\}\\
 \mcS^{\tnr{1P}}_{\ack}&=\Big\{\mfs\in \mcS | \set{\mcM^{\mfs}=(\ack,-) \nonumber \\
  &~~~~~~~~~~~~~~~~~\cap \bk^{\mfs}=(l,0), 1\leq l\leq\kmax }\Big\}\\
  \mcS^{\tnr{1P}}_{\nack}&=\Big\{\mfs\in \mcS | \set{\mcM^{\mfs}=(\nack,-) \nonumber \\
  &~~~~~~~~~~~~~~~~~\cap \bk^{\mfs}=(l,0), 1\leq l\leq\kmax }\Big\}
 \end{align}
 where $\mcS_{\ack,\ack}$ contains states when both packets are decoded correctly while $\mcS_{\nack,\nack}$ presents states when both packets failed to be decoded, $\mcS_{\ack,\nack}$ is the set of states when only one packet is decoded correctly,  $\mcS^{\tnr{1P}}_{\ack}$ and $\mcS^{\tnr{1P}}_{\nack}$ characterize the states when the one-packet transmission mode is adopted in the whole \gls{harq} rounds and that the packet is successfully or unsuccessfully decoded respectively.

We also consider three types of \gls{harq} differentiated by their respective action space:
\begin{enumerate}
\item Conventional one-packet \gls{harq} (\tnr{1P}) when $p$ is irrelevant. In this case the action space $\mcA$ has only one element $(\tnr{1P}, -)$ and no optimization is needed. 
\item Time sharing multi-packet mode (\TS) when the action space is defined as $\mcA=\mcA_\tnr{mod}^{\TS}\times \mcA_{\tr{p}}$ with $\mcA_\tnr{mod}^{\TS}=\set{\tnr{1P},\tnr{0P},\TS}$.
\item Superposition coding multi-packet mode (\SC) when $\mcA=\mcA_\tnr{mod}^{\SC}\times \mcA_{\tr{p}}$ with $\mcA_\tnr{mod}^{\SC}=\set{\tnr{1P},\tnr{0P},\SC}$.
\end{enumerate}

%In the following we will focus on the \TS~scheme  but the same analysis is valid for  \SC~scenario. 

We assume that the joint-encoding actions, \ie $\mfm\in \set{{\TS}, {\SC}}$, can be taken for all states $\mfs \in \mcS$ except if  $\mfs \in \set{\mcS_{\ack,\ack} \cup \mcS^{\tnr{1P}}_{\ack} }$ when only one packet is transmitted. That is, the joint-encoding actions is only allowed when the \gls{hol} packet is not succesfully decoded.

The statistically evolution of the system is represented by the transition law $Q$. Since the states and the actions are discrete, see \secref{sec.discri}, this law is given by the state-to-state transition probabilities:
\begin{equation}\label{p.ssa}
p_{\mfs,\mfs'}(\mfa)\triangleq\Pr\set{ \mfs[n+1]=\mfs'|\mfs[n]=\mfs, \mfa[n]=\mfa}.
\end{equation}
which can be calculated using \eqref{I.ts.1}-\eqref{I.sc.2}. An example of $p_{\mfs,\mfs'}$ calculation is shown in Appendix A. We assume here the stationary behaviour, in which the time $n$ is irrelevant for the transition probability; this condition is satisfied for sufficiently large $n$, \ie after the transients phase. 

The transition from the state $\mfs[n]$ to the state $\mfs[n+1]$ depends on the action $\mfa[n]$ and on the disturbance $\SNR[n] \geq 0$ as can be seen in \eqref{I.ts.1}-\eqref{I.sc.2}. Consequently we take $\mcW=\mathbb{R}^{+}$. 

Each state-transition yields the reward given by
\begin{align}
\hat{r}(\mfs,\mfa,\mfs')=
\begin{cases} 
R , &\mfs' \in \set{\mcS_{\ack,\nack},\mcS_{\nack,\ack}, \mcS^{\tnr{1P}}_{\ack}} \\
2R, &\mfs' \in \mcS_{\ack,\ack}\\
0, & \text{otherwise},
\end{cases}
\end{align}
and the expected reward for taking action $\mfa$ in the state $\mfs$ is then given by 
\begin{equation}\label{exp.rew}
r(\mfs,\mfa)=\sum_{\mfs'\in\mcS} p_{\mfs,\mfs'}(\mfa)\hat{r}(\mfs,\mfa,\mfs').
\end{equation}

Our objective is thus to 
find a \emph{policy} $\pi: \mcS\mapsto\mcA$, such that the \gls{harq} controller taking actions $\mfa=\pi(\mfs) \in\mcA$, maximizes 
 the throughput defined as 
\begin{equation}\label{throughput.expr}
\eta(\pi)= \underset{N \rightarrow \infty}{\lim}\frac{1}{N}\sum\limits_{n=1}^N\Ex\big[ r(\mfs[n],\pi(\mfs[n])),\big],
\end{equation}
where the expectation is taken with respect to the random states $\mfs[1],\ld, \mfs[N]$ of the \gls{harq} process, and 
$r(\mfs[n],\pi(\mfs[n]))$ is the random \emph{reward} obtained using the actions $\pi(\mfs[n])$ after transmission of the block $n$  defined in \eqref{exp.rew}.

%%%%%   %%%%%    %%%%%
\subsection{Throughput optimization}\label{Sec:Th.MDP}
%%%%%   %%%%%    %%%%%
\subsubsection{Discretization}\label{sec.discri}
The problem is now formulated as the \gls{mdp} and, in order to make it tractable numerically we make the state-space $\mcS$ discrete. The first dimension $\mcCr$ of $\mcS$ is discrete by definition, so we discretize $\mcI$ using $T_{\mcI}$ points; which means that we need to discretize the set $[0,R[$ using $T_{\mcI}-1$ points (we used uniform quantization) and assign one discretization point to $\mcI_{R_+}$. 

Similarly, the actions set $\mcA$ is discretized over $T_{\tr{p}}$, thus  the encoding parameters $p\in\mcA_{\tr{p}}$ are discretized over $T_{\tr{p}}-2$ points. 

%We do not make the distinction between the continuous and discretized versions of the sets $\mcS$ or $\mcA$.
%%%%%   %%%%%    %%%%%
\subsubsection{Policy optimization}

Our goal is to solve the following optimization problem 
\begin{align}
\eta^*&=\max_{\pi \in\Pi} \eta(\pi),
\end{align}
where $\Pi $ is the set of admissible policies $\pi:\mcS\mapsto\mcA$. 

This average reward-per-stage problem can be solved using the so-called Bellman's equations \cite[Ch.~7.4]{Bertsekas05_book},
\begin{align}\label{bellman.eq}
\eta^*+h_{\mfs}&=\max_{a \in \mcA(\mfs)}\left[ r(\mfs,\mfa) + \sum_{\mfs'\in\mcS} p_{\mfs,\mfs'}(\mfa) h_{\mfs'} \right],\quad \forall \mfs\neq\mfs_\tr{p},\\
h_{\mfs_\tr{p}}&=0
\end{align}
thanks to the following lemma.

\begin{lemma}\label{assum1}
There is at least one state, denoted by $\mfs_\tr{p}$ and $m>0$, such that, for all initial states and all policies $\pi$, the probability of being in state $\mfs_\tr{p}$ at least once within the first $m$ times, is non zero, \ie $\Pr\set{ \mfs[n]=\mfs_\tr{p} }>0$, where $n<m$.
\end{lemma}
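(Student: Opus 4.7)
The plan is to exhibit a specific state $\mfs_\tr{p}$ and a uniform $m>0$ such that every stationary policy must visit $\mfs_\tr{p}$ with positive probability within the first $m$ blocks, irrespective of the initial state. I would take $\mfs_\tr{p}=(1,0,I^{\star},0)$, where $I^{\star}\subset[0,R)$ is any fixed bin of the AMI discretization of $\mcI$, and aim for $m=2\kmax$. The proof naturally splits into a combinatorial step that handles the counter dimensions $\bk$, and a density argument that handles the AMI dimension.

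For the combinatorial step, two facts driven by the update rules \eqref{upd.rule.joint}--\eqref{upd.rule.joint.2} suffice. First, because $\kk_\hol$ is capped at $\kmax$ and incremented every block regardless of the action, the current HOL packet must be resolved---by ACK, by reaching the cap, or by the 0P action---within at most $\kmax$ consecutive blocks. Second, at every HOL resolution the new HOL-next is a freshly injected packet with counter $0$ (cf.\ \eqref{upd.rule.joint.2}). Chaining these observations through two consecutive HOL life-cycles and doing a short case analysis on the post-resolution action then shows that within $m=2\kmax$ blocks the slice $\{\bk=(1,0)\}$ must be entered with strictly positive probability: for 1P/0P actions it is entered automatically on the second resolution, whereas for TS/SC actions the event ``new HOL is decoded on its first, jointly encoded transmission'' has positive probability and lands the process in $\bk=(1,0)$ in a single block by the shift rule \eqref{upd.rule.joint}.

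For the density step, conditional on the above trajectory, the AMI $I_\hol$ of the new HOL in the $\bk=(1,0)$ state is a sum of at most $\kmax$ independent contributions, each of the form $C(\SNR)$ or one of its interference-loaded analogues in \eqref{I.ts.1}--\eqref{I.sc.2}. Because the Rayleigh density $\pdf_\SNR$ is strictly positive on $(0,\infty)$ and each contribution is a continuous, strictly monotone image of $\SNR$, every summand has a density strictly positive on $(0,\infty)$; convolutions preserve this property, so the marginal density of $I_\hol$ is strictly positive on $[0,\infty)$. Consequently $\Pr\{I_\hol\in I^{\star}\}>0$, and combining the two steps gives $\Pr\{\mfs[n]=\mfs_\tr{p}\text{ for some } n<2\kmax\}>0$, as claimed.

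The main obstacle is the combinatorial step, specifically handling policies that indefinitely use joint encoding---on the surface these appear able to keep $\kk_{\hol+1}\geq 1$ and steer the trajectory away from $\bk=(1,0)$. The resolution is that the hard cap $\kmax$ on $\kk_\hol$ cannot be circumvented by any action, and \eqref{upd.rule.joint.2} unconditionally re-sets the new HOL-next counter to $0$ at each HOL resolution; together with the unboundedness of the Rayleigh SNR (which yields a strictly positive one-shot decoding probability from any $(k,0)$ state), this routes every trajectory through $\bk=(1,0)$ within two life-cycles. All subsequent arguments are just measure-theoretic bookkeeping.
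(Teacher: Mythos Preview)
Your plan and the paper's proof pick the same special state $\mfs_\tr{p}=(1,0,I_\tr{p},0)$; the paper dispatches the argument in two lines (``$\nack$ occurs with positive probability, arbitrarily small $\SNR$ occurs with positive probability''), whereas you give a structured two-step argument and, crucially, explicitly confront the obstacle of policies that persistently use joint encoding---something the paper's sketch does not address at all.

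That said, your resolution of this obstacle has a genuine gap for the \SC\ mode. You invoke ``the unboundedness of the Rayleigh SNR (which yields a strictly positive one-shot decoding probability from any $(k,0)$ state),'' but under \SC\ the HOL increment in \eqref{I.sc.1} is bounded above by $\log_2\!\bigl(1/(1-p)\bigr)$ and the HOL-next increment in the second branch of \eqref{I.sc.2} by $\log_2(1/p)$, \emph{irrespective of} $\SNR$. A stationary policy that always selects \SC\ with, say, $p=\tfrac12$ caps both increments at one bit per round; whenever $K\le R$ no packet can ever be ACKed within its $K$-round lifetime, the exclusion rule for $\mcS_{\ack,\ack}\cup\mcS^{\tnr{1P}}_{\ack}$ is never triggered, and the counter process cycles through states with $k_{\hol+1}\ge 1$ (for $K=4$: $(2,1)\!\to\!(3,2)\!\to\!(4,3)\!\to\!(4,1)\!\to\!(2,1)$) without ever entering $\bk=(1,0)$. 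So neither your ``one-shot decoding'' route nor the paper's ``all small $\SNR$'' route reaches the proposed $\mfs_\tr{p}$ under this adversarial policy, and no finite $m$ suffices. The \TS\ case is unaffected, since both increments in \eqref{I.ts.1}--\eqref{I.ts.2} are unbounded in $\SNR$.

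A minor over-complication: in state $\bk=(1,0)$ the HOL has been transmitted exactly once, necessarily via a single-packet action (otherwise $k_{\hol+1}=1$), so its AMI is a single $C(\SNR)$ term rather than a sum of up to $\kmax$ contributions; your density step still reaches the right conclusion, just more directly than you suggest.
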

\begin{proof}
We take the special state $\mfs_\tr{p}=(1,0,I_\tr{p},0)$, where $I_\tr{p}$ is the smallest value defined by the discretization. Since $\nack$ message occurs with non-zero probability, and the probability of having arbitrarily small \gls{snr} is not zero, the probability of visiting the special state $\mfs_\tr{p}$ is non-zero, too.
\end{proof}

Under Lemma~\ref{assum1}, we obtain the guarantee that the optimal throughput $\eta^*$ is independent of the initial state \cite[Prop.~7.4.1.b]{Bertsekas05_book} and to solve equation \eqref{bellman.eq} for all $\mfs$, we may use two-step policy iteration algorithm for the average reward problem \cite[Ch.~7.4]{Bertsekas05_book}. In the first step, given the policy $\pi$, we calculate the corresponding average and differential rewards,  $\eta$ and $h_{\mfs}$, respectively,  that is, we solve the following equation for each $\mfs\neq\mfs_\tr{p}$ 
 \begin{align}\label{evaluate.policy}
  \eta+h_{\mfs}&=r(\mfs,\pi(\mfs)) + \sum_{\mfs'\in\mcS}p_{\mfs,\mfs'}\big(\pi(\mfs')\big) h_{\mfs'},
 \end{align}
where $h_{\mfs_\tr{p}}=0$.
In the next step, we perform a policy improvement to update $\pi(\mfs)$ as follows
 \begin{equation}
 \label{find.policy}
\pi(\mfs)\leftarrow\argmax_{\mfa \in \mcA(\mfs)}
\big[ r(\mfs,\mfa) + \sum_{\mfs'\in\mcS} p_{\mfs,\mfs'}(\mfa) h_{\mfs'} \big].
\end{equation}
The steps \eqref{evaluate.policy} and \eqref{find.policy} are repeated till convergence, which is guaranteed to be attained in finite number of iterations  \cite[Prop.~7.4.2]{Bertsekas05_book}.

In the numerical examples, the algorithm converges with a relatively small number of iterations ( $<4$) when we choose as initial policy $\pi[\mfs]=(1P,-)$. %Further refinement of the initialization is possible but this issue is out of scope here.

%%%%%   %%%%%    %%%%%
\subsection{Numerical Results}\label{Sec:Num.Results}
%%%%%   %%%%%    %%%%%
In this section we compare the performance of the proposed \gls{irharq}, \ie \TS~and~\SC, to the conventional \gls{irharq} in terms of attainable throughput as well as the outage probability. 

The throughput of the conventional \gls{irharq} can be calculated using renewal-reward theorem \cite{Caire01,Gan10} or, using our \gls{mdp} formulation, by considering that only  the ``conventional'' \gls{irharq} actions for all the states $\mfs \in   \mcS^{\tnr{1P}}_{\nack}$ are taken. That is, adopting the following policy:
\begin{align}\label{}
\pi(\mfs)=(\tnr{1P},-).
\end{align}

\subsubsection{\TS ~vs~ \SC}  First we investigate the encoding mode to be used, \ie we want to find the benefit of deciding in favor of  the \TS~or~\SC. To this end we first run the \gls{mdp} optimization for a fixed value of $R=4$ and analyze the case when $\mcA=\mcA_\tnr{mod}^{\TS}\times \mcA_{\tr{p}}$, that is, the \gls{harq} controller is able to choose among the modes \tnr{1P}, \tnr{0P} or \TS. These results, denoted by \TS, are shown in \figref{Fig:TH.R=4}. We also show therein, under the legend \SC, the results of the \gls{mdp} optimization when $\mcA=\mcA_\tnr{mod}^{\SC}\times \mcA_{\tr{p}}$, \ie the \gls{harq} controller is to choose one of the modes \tnr{1P}, \tnr{0P} or \SC. The throughput of the conventional \gls{irharq}, denoted as 1P, as well as the ergodic capacity $\ov{C}$ are also shown for comparison.

Using the conventional \gls{irharq} and for the fixed $R$ (\ie that does not change with $\SNRav$), the benefit of increasing the number of allowed transmission $\kmax$ materializes only for low \gls{snr} and thus, for small throughput values. This is why \gls{harq} is sometimes considered valuable only for low \gls{snr} regime. As we will see in \figref{Fig:f1}, similar value of throughput may be obtained decreasing $R$ and yet keeping $\kmax$ small. Therefore, from the system-level perspective, the most valuable throughput  gains are those obtained close to the nominal transmission rate $R$, where we see that the multi-packet \gls{harq} provides significant advantage over the conventional \gls{harq}. 

As a reference we consider the throughput $\eta=0.9R$  (shown by the horizontal dashed line in \figref{Fig:TH.R=4}). An important observation---already made in \cite{Larsson14}---is that the conventional \gls{harq} presents a large \gls{snr} gap to the ergodic capacity $\ov{C}$ when the transmission rate $R$ is fixed.  For instance, the gap of approximately $8.5$~dB can be observed in \figref{Fig:TH.R=4}. This puts in the perspective the often evoked property of \gls{irharq}, which says that the throughput of \gls{irharq} can attain the ergodic capacity
with infinite number of transmissions. While this is, indeed, true, this condition materializes for the throughput $\eta$ much smaller than $R$  or for a very small \gls{snr}. Alternatively, to reach the ergodic limit having the \gls{snr} fixed, a very large rate $R$  is needed. 

From that point of view, the proposed multi-packet transmission mode seems to reduce efficiently the gap. In fact, when $R=4$ bits/channel use and $\kmax=4$, the gap is reduced by more than $60$\%. A multi-packet HARQ with $\kmax=2$ can easily present a gain above $3$~dB compared to the conventional HARQ with larger $\kmax$. When $\kmax=3$, the gains are around $6$~dB; they increase negligible for $K=4$.

We also note that, even if the difference is relatively small (less than $1$~dB), the~\SC~always outperforms the~\TS. To obtain insight into the relevance of the encoding modes, considering $\mcA=\set{\tnr{1P}, \tnr{0P}, \TS, \SC }$ and $\kmax=2$ and defining the following probabilities conditioned on the retransmission being needed (\ie on $\mfs \in \mcS_{\ack,\nack}$)
\begin{align}
\label{P.1P.2}
P_\tnr{1P,$2$}&\triangleq\Pr \set{ \mfm[n]= \tnr{1P} | \mfs \in \mcS_{\ack,\nack}},\\
\label{P.SC}
P_\tnr{\SC}&\triangleq\Pr \set{ \mfm[n] = \SC | \mfs \in \mcS_{\ack,\nack}},\\
\label{P.TS}
P_\tnr{\TS}&\triangleq\Pr \set{\mfm[n] = \TS | \mfs \in \mcS_{\ack,\nack}},\\
\label{P.Drop}
P_\tnr{Drop}&\triangleq\Pr \set{ \mfm[n] = \tnr{0P}  |\mfs \in \mcS_{\ack,\nack}},\\
\label{P.ack}
P_\tnr{\ack,1}&\triangleq\Pr \set{ \mfs \in \mcS_{\ack,\ack}},
\end{align}
where
$P_\tnr{1P,$2$}$ is the probability of choosing one-packet retransmission, 
$P_\tnr{\SC}$ is the probability of choosing \gls{sc} encoding, 
$P_\tnr{\TS}$ is the probability of choosing \gls{ts} encoding, 
$P_\tnr{Drop}$ is the probability of ``dropping" the packet without retransmission  and 
$P_\tnr{\ack,1}$ is the probability that a packet is decoded after the first transmission. 

%%%%%%%%%%%%%%%%%%%%%%
\begin{figure}[t]
\psfrag{SNR}[ct][ct][\siz]{$\SNRav$~[dB]}
\psfrag{throughput}[ct][ct][\siz]{$\eta$}
\psfrag{TS}[cl][cl][\siz]{\TS}
\psfrag{SC}[cl][cl][\siz]{\SC}
\psfrag{HA}[cl][cl][\siz]{\tnr{1P}}
\psfrag{ERG}[cl][cl][\siz]{$\ov{C}$}
\psfrag{K=2}[cl][cl][\siz]{$K=2$ }
\psfrag{K=3}[cl][cl][\siz]{$K=3$ }
\psfrag{K=4}[cl][cl][\siz]{$K=4$ }
\psfrag{n=0.9R=0.6}[cl][cl][\siz]{$\eta=0.9R=3.6$ }

\begin{center}
\scalebox{\sizf}{\includegraphics[width=\linewidth]{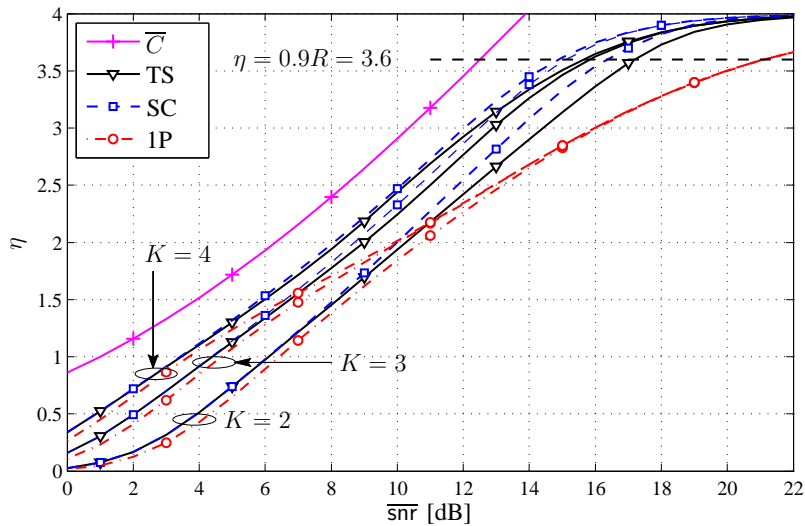}}
\caption{Throughput of the proposed multi-packet \gls{harq} ~\TS~and~ \SC~  with $T_{\tr{p}}=32$ are compared to the conventional \gls{harq} \tnr{1P} and the ergodic capacity $\ov{C}$ when $R=4$ bits/channel use and $\kmax=2, 3~\text{and}~4$.}
\label{Fig:TH.R=4}
\end{center}
\end{figure}
%%%%%%%%%%%%%%%%%%%%%%

%%%%%%%%%%%%%%%%%%%%%%
\begin{figure}[t]
\psfrag{XX}[l][l][\siz]{$P_\tnr{Drop}$}
\psfrag{SNR}[ct][ct][\siz]{$\SNRav$~[dB]}
\psfrag{RETR}[l][l][\siz]{$P_\tnr{1P,$2$}$}
\psfrag{SC}[l][l][\siz]{$P_\tnr{\SC}$}
\psfrag{NRS}[l][l][\siz]{$P_\tnr{\ack,1}$}
\psfrag{Probability}[l][l][\siz]{Probability}

\begin{center}
\scalebox{\sizf}{\includegraphics[width=\linewidth]{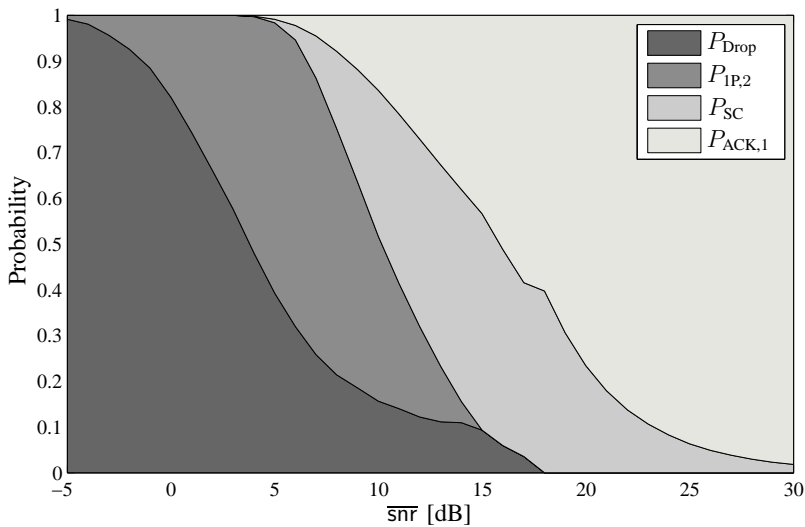}}
\caption{Probabilities defined in \eqref{P.1P.2}-\eqref{P.ack} with $R=4$ bits/channel use and $\kmax=2$.}
\label{Fig:actions.proba}
\end{center}
\end{figure}
%%%%%%%%%%%%%%%%%%%%%%

\figref{Fig:actions.proba} shows the above-defined probability as a function of $\SNRav$ from which the following  observations can be made
\begin{itemize}
\item
Below a  threshold ($\SNRav=18$~dB), it is more profitable from the throughput point of view to drop the packet, which corresponds to the action $\mfa=(\tnr{0P} , -)$. 
\item 
The one-packet encoding dominates the multi-packet encoding for $\SNRav<10$~dB, which explains the throughput results are similar for the conventional and the proposed  multi-packet \gls{harq}. 
\item 
The multi-packet \gls{sc} transmission is likely to be used for $10~\tnr{dB}<\SNRav<25$~dB; this region of \gls{snr} corresponds also to the throughput of the multi-packet \gls{harq} (\SC) being significantly larger than the throughput  of the conventional \gls{harq} (\tnr{1P}). 
\item When \gls{sc} mode is available, the time-sharing mode is never used.
\item
Asymptoticaly, \ie increasing the \gls{snr}, we increase the probability of successful decoding in the first transmission, thus all \gls{harq} modes will offer a similar throughput for large \gls{snr}. 
\end{itemize}

%%%%%%%%%%%%%%%%%%%%%%
\begin{figure}[b]
\psfrag{SNR}[ct][ct][\siz]{$\SNRav$~[dB]}
\psfrag{outage}[ct][ct][\siz]{$P_\tnr{out}$}
\psfrag{SC}[cl][cl][\siz]{\SC }
%\psfrag{2XXXX}[cl][cl][0.9]{$\eta$}
\psfrag{TS}[cl][cl][\siz]{\TS}
\psfrag{HAR}[cl][cl][\siz]{$\tnr{1P}$}
\psfrag{K=2}[cl][cl][\siz]{$K=2$ }
\psfrag{K=3}[cl][cl][\siz]{$K=3$ }
\psfrag{K=4}[cl][cl][\siz]{$K=4$ }
\begin{center}
\scalebox{\sizf}{\includegraphics[width=\linewidth]{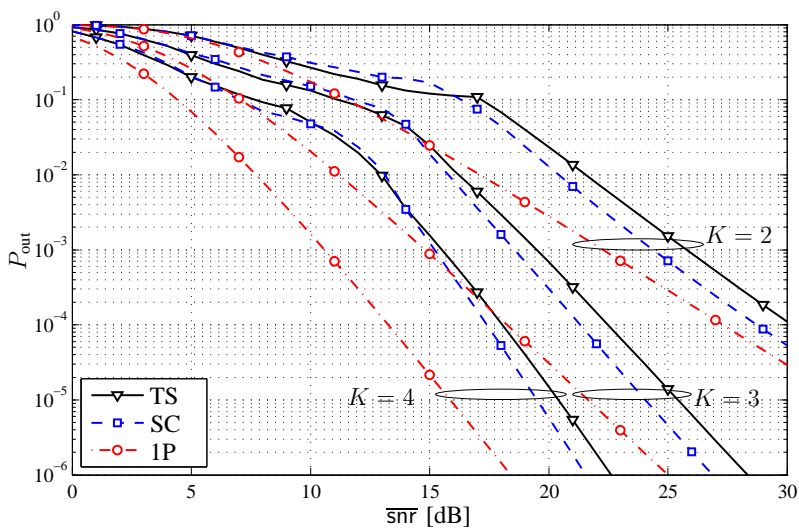}}
\caption{Outage corresponding to the one-packet and multiple-packet \gls{harq} when  $R=4$ bits/channel use and $T_{\tr{p}}=32$.}\label{Fig:Outage}
\end{center}
\end{figure}
%%%%%%%%

The price to pay for the larger throughput is the increase in the outage as we illustrate in \figref{Fig:Outage}. While outage considerations were absent from our discussion, we note that it is also possible to design the policies which take into account the constraints on the outage, however, we leave this issue beyond the scope of our work.

In \figref{Fig:actions.par} we show an example of the optimal value of the parameter $p$ as a function of the \gls{ami} $I_{\hol,1}$, \ie when the optimal actions are $\mfa(\mfs )=(\TS,p)$ or $\mfa(\mfs )=(\SC,p)$ , and when $\bk^{\mfs}=(1,0)$. The intuition behind such results is clear: for larger $I_{\hol,1}$, \ie when the first transmission results are close to being decoded (this happens when $I_{\hol,1}>R=4$, the power (for \SC) or the time (for \TS) fractions attributed to the retransmission decrease. 
%%%%%%%%%%%%%%%%%%%%%%
\begin{figure}[t]
\psfrag{SNR}[ct][ct][\siz]{$I_{{\hol},1}$}
\psfrag{policy}[ct][ct][\siz]{$p$}
\psfrag{SCX}[cl][cl][\siz]{\SC}
\psfrag{TSX}[cl][cl][\siz]{\TS}

\begin{center}
\scalebox{\sizf}{\includegraphics[width=\linewidth]{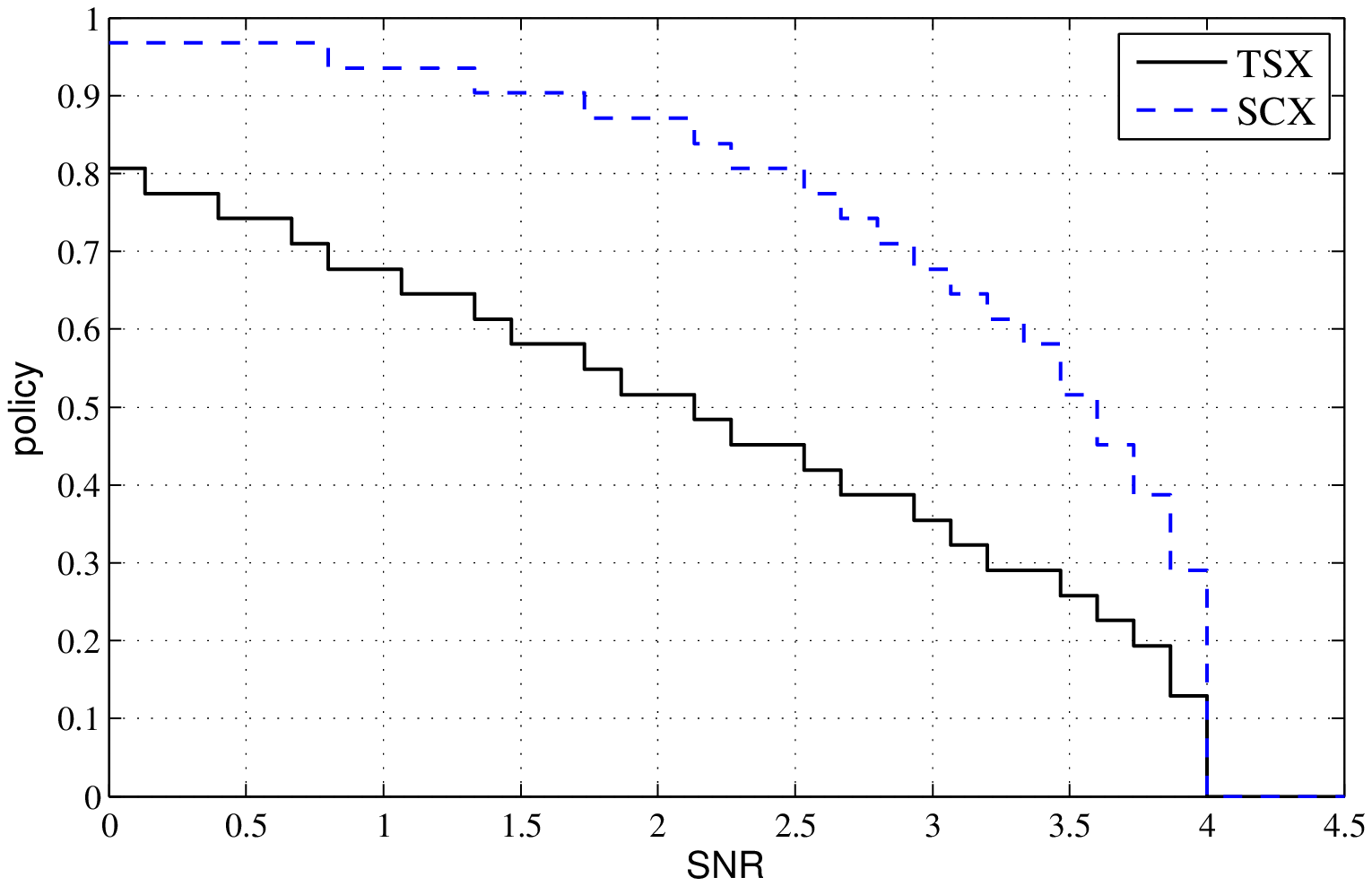}}
\caption{An example of the optimal parameter $p$ obtained for ~\TS~and~\SC~ when $R=4$ bits/channel use, $\SNRav=16$~dB, $T_{\tr{p}}=32$ and $\kmax=4$.}\label{Fig:actions.par}
\end{center}
\end{figure}
%%%%%%%%%%%%%%%%%%%%%%

\subsubsection{Comparison for different $R$}
We show the results of the throughput for different values  $R$ in \figref{Fig:f1}; the results shown for $R=4$ are the same as those we already presented in \figref{Fig:TH.R=4}. As we can see, the gains of the~\SC~over the~\TS~are less pronounced for smaller values of $R$ and $\SNRav$. This is reminiscence of the similar behaviour of the conventional \gls{sc} broadcast transmission, where the gains with respect to the \gls{ts} appear also in high \gls{snr}. 

The main conclusion is that the multi-packet \gls{harq} provides an important increase of the throughput in the zone of interest (that is, for throughput values close to the nominal transmission rate $R$).% and these gains become more pronounced with increasing $R$. 

\begin{figure}[t]
\psfrag{XXXXXX1}[cl][cl][\siz]{\SC, $\kmax=2$ }%{$R(d)$}
\psfrag{XXXXX2}[cl][cl][\siz]{\TS, $\kmax=2$}
\psfrag{XXXXX3}[cl][cl][\siz]{\tnr{1P}, $\kmax=2$}
\psfrag{XXXXX4}[cl][cl][\siz]{\tnr{1P}, $\kmax=4$}
\psfrag{SNR}[ct][ct][\siz]{$\SNRav$~[dB]}
\psfrag{T}[ct][ct][\siz]{$\eta$}
\psfrag{XX}[cl][cl][\siz]{$R=4$}
\psfrag{XX2}[cl][cl][\siz]{$R=2$}
\psfrag{XX3}[cl][cl][\siz]{$R=1$}
\psfrag{XX4}[cl][cl][\siz]{$R=0.5$}
\psfrag{ERG}[cl][cl][\siz]{$\ov{C}$}
\begin{center}
\scalebox{\sizf}{\includegraphics[width=\linewidth]{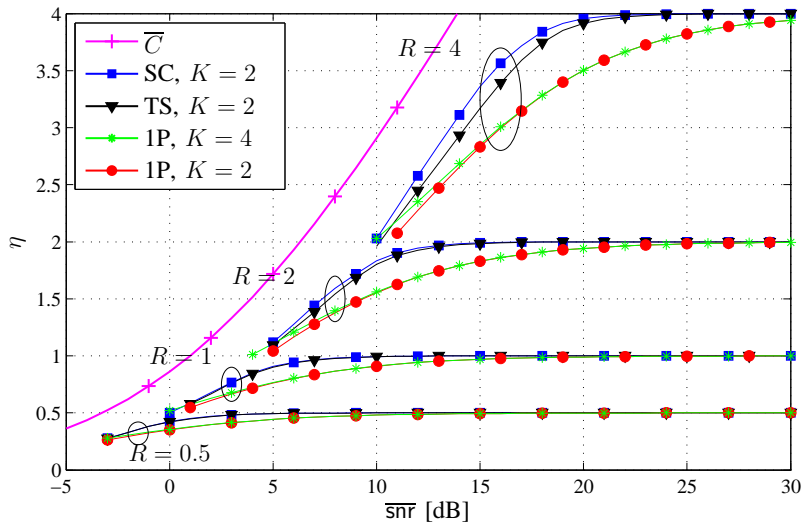}}
\caption{Throughput obtained for various values of $R$ bits/channel use, where for clarity, the results are shown only for  $\eta>R/2$.}\label{Fig:f1}
\end{center}
\end{figure}

\subsubsection{A Note on Discretization}

To provide an insight into the discretization effects, we show \figref{Fig:dis}. We emphasize here that we took  a sufficiently large $T_{\mcI}$ to accurately calculate the throughput ($T_{\mcI}=32$ in the numerical examples). Thus, discretization effects are almost entirely captured by $T_{\tr{p}}$. We note that \TS~and~\SC~ present notable gain compared to \tnr{1P} with only $T_{\tr{p}}=4$. The results do not change significantly for $T_{\tr{p}}\geq 16$. For a given $T_\tr{p}$, performance may be improved if we considered non uniform quantization, specially in the case of \SC; the issue of finding the optimal quantization is, however, out of scope of this work. 

%---------------------------
%---------------------------
\begin{figure}[t]
\psfrag{SNR}[ct][ct][\siz]{$\SNRav$~[dB]}
\psfrag{throughput}[ct][ct][\siz]{$\eta$}
\psfrag{TS,dis=32,K=4XXXXX}[cl][cl][\siz]{\TS, $T_{\tr{p}}=32, \kmax=4$ }
\psfrag{TS,dis=8,K=4}[cl][cl][\siz]{\TS, $T_{\tr{p}}=8, \kmax=4$ }
\psfrag{TS,dis=4,K=4}[cl][cl][\siz]{\TS,  $T_{\tr{p}}=4, \kmax=4$ }
\psfrag{HARQ,K=4}[cl][cl][\siz]{\tnr{1P}, $\kmax=4$ }
\psfrag{HARQ,K=8}[cl][cl][\siz]{\tnr{1P}, $\kmax=8$ }
\psfrag{SC,dis=32,K=4XXXXX}[cl][cl][\siz]{\SC, $T_{\tr{p}}=32, \kmax=4$ }
\psfrag{SC,dis=16,K=4}[cl][cl][\siz]{\SC, $T_{\tr{p}}=16, \kmax=4$ }
\psfrag{SC,dis=4,K=4}[cl][cl][\siz]{\SC, $T_{\tr{p}}=4, \kmax=4$ }
\psfrag{HARQ,K=4}[cl][cl][\siz]{\tnr{1P}, $\kmax=4$ }
\psfrag{HARQ,K=8}[cl][cl][\siz]{\tnr{1P}, $\kmax=8$ }
\psfrag{Erg}[cl][cl][\siz]{$\overline{\eta}$}

\begin{center}
\scalebox{\sizf}{\includegraphics[width=1\linewidth]{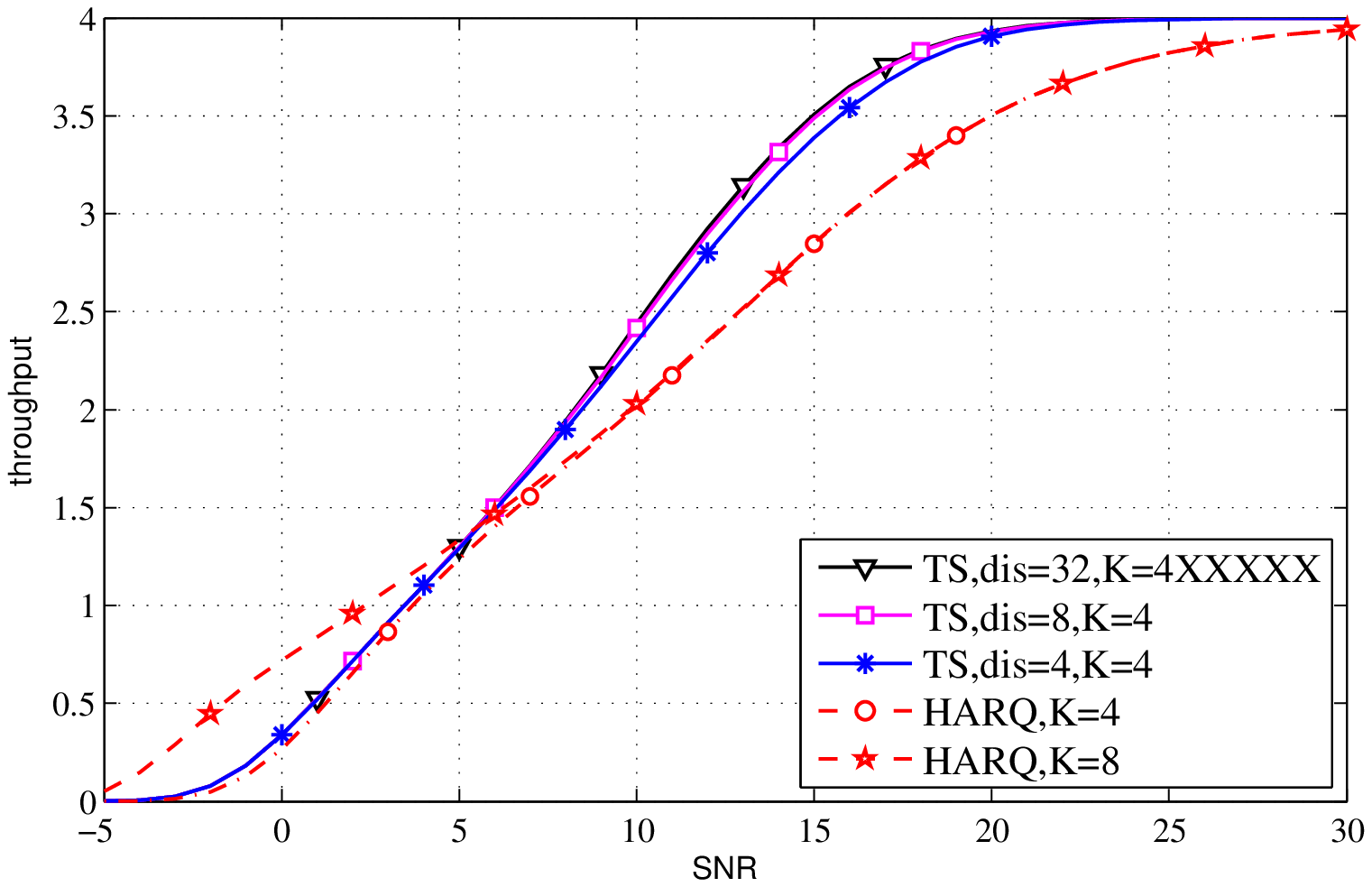}}
\\
(a)
\\
\scalebox{\sizf}{\includegraphics[width=1\linewidth]{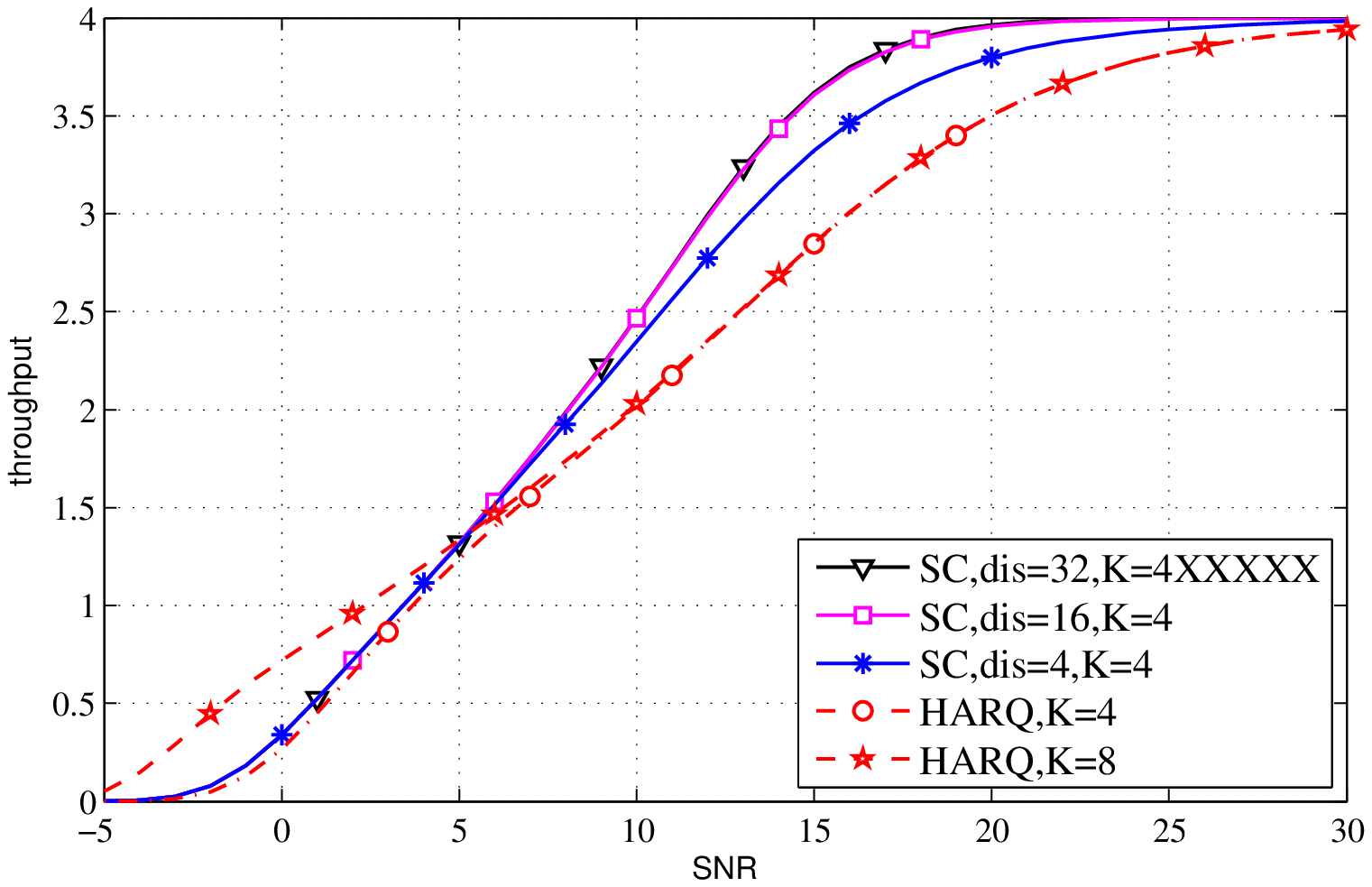}}
\\
(b)
\caption{Throughput of the proposed multi-packet in the case of (a) \gls{harq} with~\TS~ and (b) \gls{harq} with~\SC. Different values of $T_{\tr{p}}$ are considered and $R=4$ bits/channel use. The throughput of the conventional \gls{harq} \tnr{1P} is shown for comparison.} \label{Fig:dis}
\end{center}
\end{figure}
\section{One-bit feedback scenario} \label{Sec:One_bit}

In this section, we consider the scenario where only the conventional 1-bit feedback \ack/\nack~is available at the transmitter. In this case, the state space $\mcS=\mcCr \times \mcI^{2}$ is partially observable:  the random variable $\mcI^{2}$ is non-observable while $\mcCr$ is fully observable from transmitter by using only the received \ack/\nack~bits. This situation is known in the literature as \gls{psimdp} \cite{Arapostathis93}. 

In the  \gls{psimdp} context, the \gls{harq} controller decides to take  actions $\mfa[n]$ on the basis of \textit{observable history} defined as:
\begin{equation}\label{hist}
\mcO[n]=(\mfs[0],\mfa[0], \mcM[1], \bk[1] ,\mfa[1],\cdots, \mfa[n-1], \mcM[n], \bk[n])
\end{equation}
where the initial state $\mfs[0]$ and the corresponding action $\mfa[0]$ are assumed to be known. At the transmitter, the $\kk_{\ell}$ are updated using the \textit{observable} feedback messages $\mcM[n]$; the past actions $\mfa$ are also perfectly known to the transmitter. In general, not the entire history is useful  to the controller but only the parts related to the packets under transmission at time $n$, \ie \gls{hol} and  \gls{hol}-next packet. %We pass over here the details about the updated of the  \textit{useful observable history}.

The standard procedure to solve \gls{psimdp}  problem consists in finding the equivalent perfect state \gls{mdp} problem $(\mcZ, \mcA, \mcW, Q', r')$ \cite{Bertsekas05_book}. The state space $\mcZ$ is $\mcCr \times \mcP(\mcI^{2})$, with $ \mcP(\mcI^{2})$, called \textit{the belief states}, is the space of all probability measures on $\mcI^{2}$. The definition of the action space $\mcA$ and the disturbance space $\mcW$ are the same as in the sec.~\ref{sec_multi.bit}. 

Let $\mfz\in \mcZ$ be defined as the couple $(\kk,b)$ where $b(x)=\pdf_{\Iv}(x/\mcO[n])$ is the  a posteriori distribution of $\Iv=(I_{{\hol},\kk_{\hol}},I_{{\hol+1},\kk_{\hol+1}})$ given an observable history $\mcO$. The statistical evolution of the system is captured by the transition law $Q' : \mcZ \times \mcA \mapsto \Pi(\mcZ)$ where $\Pi(\mcZ)$ presents the set of discrete probability distribution over  $\mcZ$. The expected reward $r'(\mfz,\mfa)$ for taking the action $\mfa$ in the state $\mfz$ is then given by:
\begin{equation}
r'(\mfz,\mfa)=\int_{ \mcI^{2}} \hat{r}(\mfz,\mfa,x)  b(x) \tr{d}x.
\end{equation}

The main challenge in solving the \gls{psimdp} problem is the characterization of the space of the belief states $\mcP(\mcI^{2})$, \ie the space of functions $b(x)$.

\subsection{Case  of maximum two allowed transmission}
When $\kmax=2$, we only need to track the value $I_{\hol,1}$ or $I_{\hol+1,1}$ when a \nack~message is received, \ie the \gls{ami} after the first transmission of the \gls{hol}, or \gls{hol}-next packet. Thus, the belief states $b(x)$ need to be defined over only one dimension and will be parametrized by the set of possible actions. In other words,  we define the state space using the actions  $\mfa[n-1]$ and the feedback message $\mcM[n]$. The closed form expression are given in \eqref{b.state.k.2} where $\mathbb{I}(x)=1$ if $x$  is true, and $0$ otherwise, $\cdf_{\SNR}(x)=1-\exp\big(- {x}/{\SNRav}\big)$ is the of $\SNR$' \gls{cdf} and $\Pr\{\nack\}$ is the probability of not decoding the superposed \gls{hol}-next packet given that the  \gls{hol} packet was also not decoded:
\begin{equation}
\Pr\{\nack\}=
\begin{cases} 
\cdf_{\SNR}(\frac{2^R-1}{1-p\cd2^R}), &~~\text{if}~~p< 2^{-R}\\
1,&~~\text{otherwise}
\end{cases}.
\end{equation}
\begin{figure*}

\begin{spacing}{1}
\begin{equation} \label{b.state.k.2}
\small{
b(x)=
\begin{cases} 
\displaystyle{ \frac{\log({2})\cd2^{\frac{x}{1-p}}\cd \pdf_{\SNR}\left(2^{\frac{x}{1-p}}-1\right)}{(1-p)\cd\cdf_{\SNR}\left(2^{\frac{R}{1-p}}-1\right)}} \cd \mathbb{I}(x\leq R)&~ ~\text{if}~~\mcM[n]\in \set{(\ack,\nack),(\nack,\nack)},\\
&~~~\bk[n]=(2,1), \mfm[n]=\TS, \mfm[n+1]\in \set{\TS, \tnr{1P}}\\
\\
\displaystyle{ \frac{\log({2})\cd(1-p)\cd 2^{x}}{(1-p\cd2^x)^2\cd \Pr\{\nack\}}}\cd\pdf_{\SNR}\left(\frac{2^x-1}{1-p\cd 2^x}\right) \cd \mathbb{I}(p< 2^{-x})  &~~\text{if}~~\mcM[n]=(\nack,\nack), \bk[n]=(2,1),\\
 &~~~~~ ~\mfm[n]=\SC, \mfm[n+1]\in\set{\SC, \tnr{1P}}\\
\\
\displaystyle{ \frac{\log({2})\cd2^{x}}{(1-p)\cd \cdf_{\SNR}\left(\frac{2^{R}-1}{1-p}\right)}}\cd\pdf_{\SNR}\left(\frac{2^x-1}{1-p}\right) \cd \mathbb{I}(x\leq R) &~~\text{if}~~\mcM[n]=(\ack,\nack), \bk[n]=(2,1),\\
 &~~~~~ ~\mfm[n]=\SC, \mfm[n+1]\in\set{\SC, \tnr{1P}}\\
\\
\displaystyle{ \frac{ \log(2)\cd 2^{x}\cd \pdf_{\SNR}\left(2^{x}-1\right)}{\cdf_{\SNR}\left(2^{R}-1\right)}} \cd \mathbb{I}(x\leq R)&~~\text{if}~~\mcM[n]=(\nack,-), \bk[n]=(1,0)\\
\end{cases}}
\end{equation}

\end{spacing}
\hrulefill
\end{figure*}

\subsection{General case}

When $\kmax>2$, the belief states $b(x)$ are defined over two dimensions and cannot be derived in closed form. 

When the belief states are defined over one dimension, an approximation method was proposed in \cite{these_tajan}, which projects the beliefs on the parametrized set of functions. For example, Beta$(\theta_{1},\theta_{2})$ function was used in \cite{these_tajan} to parametrize the \gls{ami} $I_{\hol,\kk_\hol}$ (\cite{these_tajan} considered one-packet \gls{harq}). In our case, the observations $I_{\hol,\kk_{\hol}}$ and $I_{\hol+1,\kk_{\hol+1}}$ are dependant and we would need a projection of the joint \gls{pdf} of these two variables on the space of two-dimensional functions. This approach is thus tedious and to overcome this difficulty, we assume that if a retransmission is needed, the controller will always adopt the unique-action policy:
\begin{align}\label{alloc_app}
\pi(\mfs)=\hat{\mfa}.
\end{align}
Formally, the objective is to solve:
\begin{align}
\tilde{\eta}&=\max_{\hat{\mfa} \in\mcA} \eta(\pi),
\end{align}
and an exhaustive research over the one-dimensional space of allowed actions $\mcA$ is sufficient to determine the suboptimal action.  

\subsection{Numerical Results}
%---------------------------
%---------------------------
\begin{figure}[t]
\psfrag{SNR}[ct][ct][\siz]{$\SNRav$~[dB]}
\psfrag{throughput}[ct][ct][\siz]{$\eta$}
\psfrag{TSA}[cl][cl][\siz]{$\tilde{\eta}$ }
%\psfrag{2XXXX}[cl][cl][0.9]{$\eta$}
\psfrag{TSD}[cl][cl][\siz]{\TS}
\psfrag{TS1}[cl][cl][\siz]{$\hat{\eta}$}
\psfrag{HAR}[cl][cl][\siz]{${\tnr{1P}}$}
\psfrag{K=2}[cl][cl][\siz]{$K=2$ }
\psfrag{K=3}[cl][cl][\siz]{$K=3$ }
\psfrag{K=4}[cl][cl][\siz]{$K=4$ }
\psfrag{SCA}[cl][cl][\siz]{$\tilde{\eta}$ }
%\psfrag{2XXXX}[cl][cl][0.9]{$\eta$}
\psfrag{SCS}[cl][cl][\siz]{\SC}
\psfrag{SC1}[cl][cl][\siz]{$\hat{\eta}$}
\psfrag{HAR}[cl][cl][\siz]{${\tnr{1P}}$}

\begin{center}
\scalebox{\sizf}{\includegraphics[width=1\linewidth]{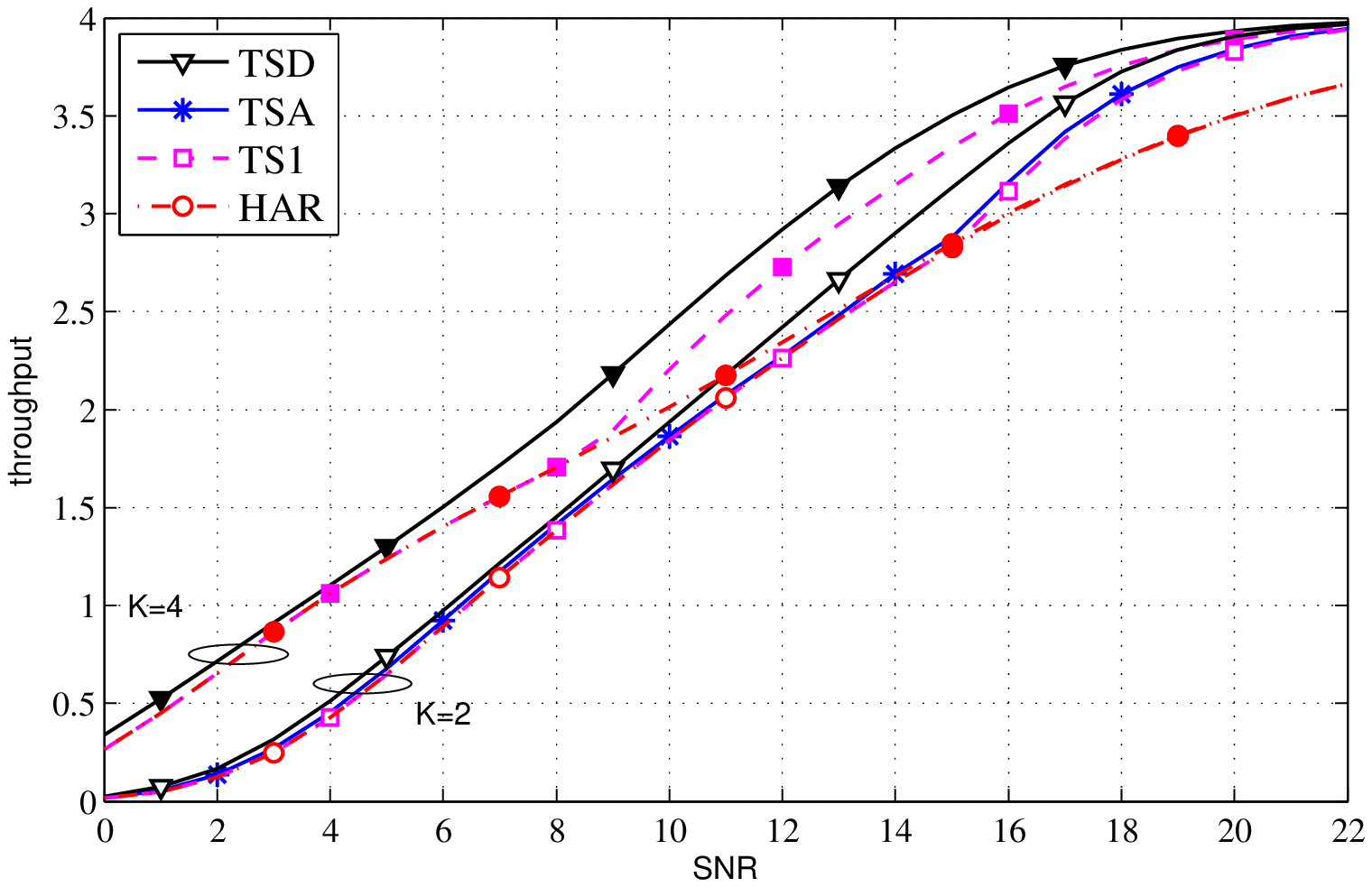}}
\\
(a)
\\
\scalebox{\sizf}{\includegraphics[width=1\linewidth]{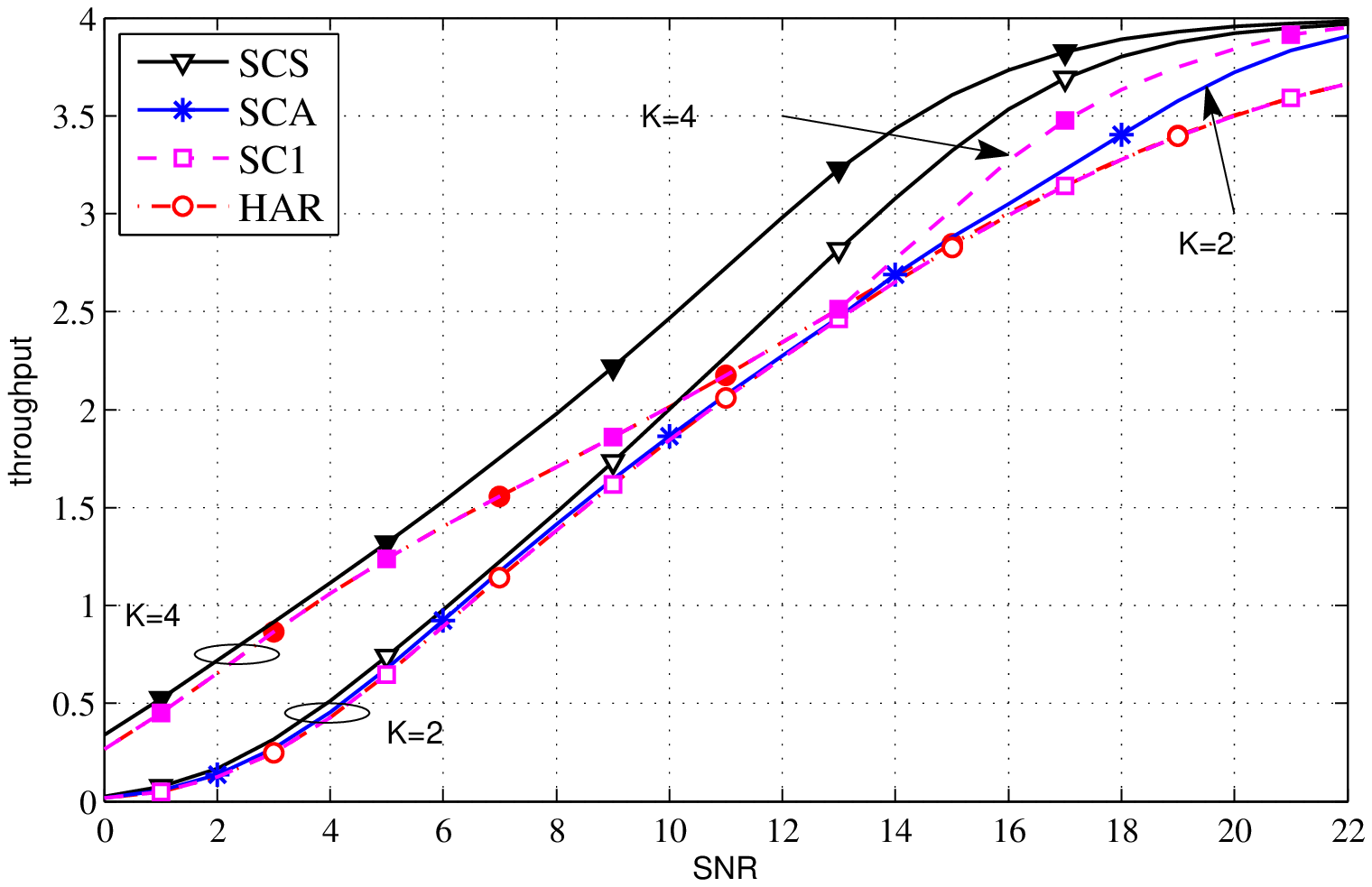}}
\\
(b)
\caption{Throughput of the multi-packet \gls{harq} in the case of (a) $\mcA=\mcA_\tnr{mod}^{\TS}\times \mcA_{\tr{p}}$ and (b) $\mcA=\mcA_\tnr{mod}^{\SC}\times \mcA_{\tr{p}}$; the multi-bit feedback (TS), one-bit feedback ($\tilde{\eta}$), unique-action one-bit feedback ($\hat{\eta}$) and conventional one-packet \gls{harq} (1P). $T_{\tr{p}}=16$ and $R=4$ bits/channel use.} \label{Fig:all}
\end{center}
\end{figure}

\figref{Fig:all} compare the performance of the proposed multi-packet \gls{harq} protocols. When the whole history $\mcO$ is used, the results are denoted by $\tilde{\eta}$ (for $K=2$), while $\hat{\eta}$ presents the simplified policy in \eqref{alloc_app}.

We observe that
\begin{itemize}
 \item For~\TS, the simplified unique-action protocol defined in \eqref{alloc_app} yields practically the same throughput as the one based on the complete parametrization of the state-space (for $K=2$). We thus conjecture that the same results will be obtained for $K>2$.
\item As expected, using one-bit feedback introduces the penalty with respect to multi-bit feedback but still, using \gls{ts}, the gains over the conventional \gls{harq} are notable, varying from $2.5$~dB (for $K=2$) to $4$~dB (for $K=4$).
\item For~\SC, the simplified unique-action protocol yields the same results as the one-packet transmission for $\kmax=2$; this is not entirely surprising as we already observed that the \gls{sc} is very sensitive to the discretization of the parameter space $A_\tr{p}$ in \figref{Fig:dis} (b). For $K=4$, we obtain an appreciable gain of $3$~dB.
\item When the complete parametrization of the state-space is used (for $K=2$), the gains of ~\SC~over the conventional \gls{harq} are around $1.5$ dB.
\end{itemize}

\section{Conclusions}\label{Sec:Conclusions}
In this work we proposed and analyzed the so-called multi-packet \gls{harq}, where various packets are simultaneously transmitted within the same block. We consider adjusting the joint encoding modes depending on the state of the receiver in the past transmissions and, formulating the problem as partial or full state information Markov decision process, we optimize the encoding modes and parameters of the \gls{harq}. Our results indicate that the joint encoding yields gains of various dB over the conventional \gls{harq} even in the simple case of one-bit feedback and \gls{harq} truncated to $\kmax=2$ transmission. These gains can be increased by $1-2$~dB by adding a few ($3-4$) additional feedback bits.

We also observed that the gains of the \gls{sc} with respect to the \gls{ts} are relatively small in the case of full observable state; one-bit feedback, however, it removes the advantage of the \gls{sc}. Thus, in the point-to-point \gls{harq}, the \gls{ts}, being simpler to implement, should be preferred over \gls{sc}. 

%%%%%%%
\section*{Appendix A}
%\appendix 
\label{Sec:AppendixA}

We aim at determining the expression of $p_{\mfs,\mfs'}$ when $\kmax=2$. In this case, the set of possible values of $\bk=(\kk_{\hol}, \kk_{\hol+1})$ is $\mcCr=\set{(1,0),(2,1),(2,0)}$. One possible illustration of the state space $\mcS$ is given in \figref{Fig:markov-chain}. For each possible value of $\bk$, the vertical line presents the possible values of $I_{{\hol},\kk_{\hol}}$ while the horizontal line presents $I_{{\hol+1},\kk_{\hol+1}}$. When only the \gls{hol} packet is transmitted, \ie $I_{{\hol+1},\kk_{\hol+1}}=0$, the  horizontal line is irrelevant. For convenience we refer to the discretization interval corresponding to a state $\mfs$ as $\mcI^\mfs$.

The state $s$ in the figure corresponds to parameter $\bk^{\mfs}=(1,0)$ and $\bI^{\mfs}=(I_{\hol,1}^{\mfs},0)$. The transition from the state $\mfs$ at time $n$ to the state $\mfs'$ at time $n+1$ depends on the action $\mfa[n]=(\mfm[n],p[n])$ and the \gls{snr}. For example three situations can occur regarding the value of $\mfm[n]$. Namely: 
\begin{itemize}
\item If $\mfm[n]=\tnr{1P}$ the system moves to $\bk^{\mfs'}=(2,0)$. This transition, represented by a solid arrow, has the probability $p_{\mfs,\mfs'}(\mfa)=\Pr \set{I^{\mfs}_{{\hol},1}+ C(\SNR) \in \mcI_{{\hol},2}^{\mfs'}}$.
\item If $\mfm[n]=\tnr{0P}$ the \gls{harq} controller increments the \gls{hol} according to \eqref{upd.rule.joint} and the system moves to $\bk^{\mfs'}=(1,0)$, which is presented by the doted arrow. In this case, $p_{\mfs,\mfs'}(\mfa)=\Pr \set{ C(\SNR) \in \mcI_{{\hol},1}^{\mfs'}}$.
\item If $\mfm[n]=\TS$ the multi packet scenario is adopted and the counters are updated to $\bk^{\mfs'}=(2,1)$. This situation is depicted by the dashed arrow and $p_{\mfs,\mfs'}(\mfa)=\Pr \set{I^{\mfs}_{{\hol},1}+p C(\SNR) \in \mcI^{\mfs'}_{{\hol},2}  ~\cap~  (1-p) C(\SNR) \in \mcI^{\mfs'}_{{\hol+1},1}}$.
\end{itemize}
%(\kk_{\hol}, \kk_{\hol+1})
%%%%%%%%%%%%%%%%%%%%%%
\begin{figure}[t]
\begin{center}
\scalebox{0.75} % Change this value to rescale the drawing.
{
\begin{pspicture}(0,-4.8001366)(10.061289,4.8001366)
\psline[linewidth=0.04cm,arrowsize=0.05291667cm 2.0,arrowlength=1.4,arrowinset=0.4]{->}(6.2452345,1.0968555)(6.2652345,4.1968555)
\psline[linewidth=0.04cm,arrowsize=0.05291667cm 2.0,arrowlength=1.4,arrowinset=0.4]{->}(6.2452345,1.0968555)(9.325234,1.0768554)
\psdots[dotsize=0.12,dotstyle=+](6.2652345,3.1368554)
\psdots[dotsize=0.12,dotstyle=+](8.305234,1.0768554)
\usefont{T1}{ptm}{m}{n}
\rput(5.8637986,3.2418554){$R$}
\usefont{T1}{ptm}{m}{n}
\rput(8.443799,0.7618555){$R$}
\usefont{T1}{ptm}{m}{n}
%\rput(6.4532814,4.6218553){$I_{{\hol},2}$}
\rput(6.4532814,4.32){$I_{{\hol},2}$}
\usefont{T1}{ptm}{m}{n}
\rput(9.7987795,1.2018554){$I_{{\hol+1},1}$}
%\rput(9.7987795,0.8){$I_{{\hol+1},1}$}
\usefont{T1}{ptm}{m}{n}
\rput(6.0423145,0.98185545){$0$}
\psline[linewidth=0.04cm,arrowsize=0.05291667cm 2.0,arrowlength=1.4,arrowinset=0.4]{->}(1.8452344,1.0768554)(1.8652344,4.1768556)
\psdots[dotsize=0.12,dotstyle=+](1.8652344,3.1168554)
\usefont{T1}{ptm}{m}{n}
\rput(1.4637989,3.2218554){$R$}
\usefont{T1}{ptm}{m}{n}
%\rput(2.0543358,4.6018553){$I_{{\hol},1}$}
\rput(2.0543358,4.32){$I_{{\hol},1}$}
\usefont{T1}{ptm}{m}{n}
\rput(1.6423144,0.9618555){$0$}
\psline[linewidth=0.04cm,arrowsize=0.05291667cm 2.0,arrowlength=1.4,arrowinset=0.4]{->}(6.2852345,-4.0031447)(6.3052344,-0.90314454)
\psdots[dotsize=0.12,dotstyle=+](6.3052344,-1.9631445)
\usefont{T1}{ptm}{m}{n}
\rput(5.903799,-1.8581445){$R$}
\usefont{T1}{ptm}{m}{n}
%\rput(6.494336,-0.47814453){$I_{{\hol},2}$}
\rput(6.494336,-0.77){$I_{{\hol},2}$}
\usefont{T1}{ptm}{m}{n}
\rput(6.0823145,-4.1181445){$0$}
\usefont{T1}{ptm}{m}{n}
%\rput(0.86041015,0.5218555){$(\kk_{\hol}, \kk_{\hol+1})=(1,0)$}
\rput(0.8,0.5218555){$(\kk_{\hol}, \kk_{\hol+1})=(1,0)$}
\usefont{T1}{ptm}{m}{n}
\rput(6.4504104,0.5218555){$(\kk_{\hol}, \kk_{\hol+1})=(2,1)$}
\usefont{T1}{ptm}{m}{n}
%\rput(6.4504104,-4.5781446){$(\kk_{\hol}, \kk_{\hol+1})=(2,0)$}
\rput(6.4504104,-4.35){$(\kk_{\hol}, \kk_{\hol+1})=(2,0)$}
\psdots[dotsize=0.2,dotstyle=square*](1.8652344,2.1768556)
\psdots[dotsize=0.2,dotstyle=square*](7.3252344,2.4768555)
\psdots[dotsize=0.2,dotstyle=square*](1.8652344,1.4568554)
\psdots[dotsize=0.2,dotstyle=square*](6.3052344,-2.6631446)
\psbezier[linewidth=0.04,arrowsize=0.05291667cm 2.0,arrowlength=1.4,arrowinset=0.4]{->}(1.9252343,1.3768555)(1.9252343,0.5768555)(2.9552295,-1.3251064)(3.7252343,-1.9631445)(4.4952393,-2.6011827)(5.734739,-3.623025)(6.2452345,-2.7631445)
\psbezier[linewidth=0.04,linestyle=dashed,dash=0.16cm 0.16cm,arrowsize=0.05291667cm 2.0,arrowlength=1.4,arrowinset=0.4]{->}(1.9252343,1.3968555)(1.9252343,0.59685546)(7.3052344,1.5568554)(7.3052344,2.3568554)
\rput{-89.88666}(0.22477505,3.8172667){\psarc[linewidth=0.04,linestyle=dotted,dotsep=0.16cm,arrowsize=0.05291667cm 2.0,arrowlength=1.4,arrowinset=0.4]{->}(2.0248003,1.7960232){0.39936668}{0.0}{183.60481}}
\usefont{T1}{ptm}{m}{n}
\rput(7.581504,2.8218555){$\mfs'$}
\usefont{T1}{ptm}{m}{n}
\rput(6.650205,-2.4181445){$\mfs'$}
\usefont{T1}{ptm}{m}{n}
\rput(1.4944726,2.3618555){$\mfs'$}
\usefont{T1}{ptm}{m}{n}
\rput(1.4944726,1.5418555){$\mfs$}
\end{pspicture} 
}
\caption{State space $\mcS$ representation when $\kmax=2$}\label{Fig:markov-chain}
\end{center}
\end{figure}
%%%%%%%%%%%%%%%%%%%%%%

%\begin{figure*}
%\begin{spacing}{1.3}
%\begin{equation}\label{TS_transition_prob}
%p_{\mfs,\mfs'}(\mfa)=
%\begin{cases} 
%%%
%\Pr \set{
%I_{\mfs}+p_\mfa C(\SNR) < R ~\cap~  
%(1-p_\mfa)C(\SNR) \in \mcI_{\mfs'} 
%} , 
%& \mfm=\TS ,~  \mcM^{\mfs'}=\set{\nack,\nack} ,~ \mfs \in \mcS_{\ack,\nack} \\
%%%
%\Pr \set{I_{\mfs}+p_\mfa C(\SNR) \geq R  ~\cap~  (1-p_\mfa) C(\SNR) \in \mcI_{\mfs'}  }, &
%\mfm=\TS ,~  \mcM^{\mfs'}=\set{\ack,\nack} ,~  \mfs \in \mcS_{\ack,\nack} \\
%%%
%\Pr \set{ I_{\mfs}+ C(\SNR) < R }, 
%&\mfa\in \mcA_\tnr{1P} ,~ \mfs' \in \mcS_{\text{NACK},0} ,~ \mfs \in \mcS_{\mcM,1}\\
%%%
%\Pr \set{ I_{\mfs} + C(\SNR) \geq R }, 
%&\mfa\in \mcA_\tnr{1P} ,~ \mfs' \in \mcS_{\text{ACK},0} ,~ \mfs \in \mcS_{\mcM,1}  \\
%%%
%\Pr \set{ C(\SNR) \in \mcI_{\mfs'} }, 
%& \forall \mfa   , ~ \mfs' \in \mcS_{\nack,1}  ,~  \mfs \in \mcS_{\mcM,0}\\
%%%
%\Pr \set{ C(\SNR) \geq R }, 
%& \forall \mfa  , ~ \mfs' \in \mcS_{\ack,1}    ,~  \mcI_{\mfs'}=\mcI_{R_+}, ~  \mfs \in \mcS_{\mcM,0} \\
%0, & \mbox{otherwise }
%\end{cases}
%\end{equation}
%\end{spacing}
%\hrulefill
%\end{figure*}

%%%%%%%

%\section*{Acknowledgement}

\balance

%\bibliography{IEEEabrv,references_SC_HARQ}
\bibliographystyle{IEEEtran}

\end{document}